\def\RR{{\mathbb R}}
\def\CC{{\mathbb C}}
\def\ZZ{{\mathbb Z}}
\def\A{{\mathcal A}}
\def\B{{\mathcal B}}
\def\F{{\mathcal F}}
\def\H{{\mathcal H}}
\def\K{{\mathcal K}}
\def\M{{\mathcal M}}
\def\N{{\mathcal N}}
\def\P{{\mathcal P}}
\def\R{{\mathcal R}}
\def\a{\alpha}
\def\f{\varphi}
\def\G{\Gamma}
\def\k{\kappa}
\def\L{{\mathrm L}}
\def\p{\psi}
\def\R{{\mathrm R}}
\def\Rc{{\mathcal R}}
\def\t{\tau}
\def\Ad{{\hbox{\rm Ad\,}}}
\def\sp{{\rm sp}\,}
\def\1{{\mathbbm 1}}
\def\netu1{{\A^{(0)}}}
\def\diff{{\rm Diff}}
\def\diffs1{\diff(S^1)}
\def\mob{{\rm M\ddot{o}b}}
\def\mob2{{\rm M\ddot{o}b}^{(2)}}
\def\supp{{\rm supp}}
\def\u1{{\rm U}(1)}
\def\psl2r{{\rm PSL}(2,\RR)}
\def\sl2r{{\rm SL}(2,\RR)}
\def\su11{{\rm SU}(1,1)}
\def\2dmob{{\overline{\psl2r}\times\overline{\psl2r}}}
\def\<{\langle}
\def\>{\rangle}
\def\poincare{{\P^\uparrow_+}}
\newcommand{\Ar}{\mathcal{A}_{\mathrm{r}}} 
\newcommand{\Mr}{\mathcal{M}_{\mathrm{r}}} 
\newcommand{\Hr}{\mathcal{H}_{\mathrm{r}}} 
\newcommand{\Ur}{U_{\mathrm{r}}}
\newcommand{\Trr}{T_{\mathrm{r}}} 
\newcommand{\Omr}{\Omega_{\mathrm{r}}} 
\newcommand{\Mtr}{\widetilde{\mathcal{M}}_{\mathrm{r},\f}} 
\newcommand{\Rt}{\widetilde{R}_\f} 
\newcommand{\Rtb}{\widetilde{R}_{\overline\f}} 
\newcommand{\Htr}{\widetilde{\mathcal{H}}_{\mathrm{r}}} 
\newcommand{\Ttrr}{\widetilde{T}_{\mathrm{r}}} 
\newcommand{\Omtr}{\widetilde{\Omega}_{\mathrm{r}}} 
\newcommand{\Str}{\widetilde{S}_{\mathrm{r},\f}} 
\newcommand{\Hc}{\mathcal{H}_{\mathrm{c}}} 
\newcommand{\Tc}{T_{\mathrm{c}}}
\newcommand{\Qc}{Q_{\mathrm{c}}}
\newcommand{\Mc}{\mathcal{M}_{\mathrm{c}}} 
\newcommand{\Nc}{\mathcal{N}_{\mathrm{c}}} 
\newcommand{\Rcc}{\mathcal{R}_{\mathrm{c}}} 
\newcommand{\Vc}{V_{\mathrm{c}}} 
\newcommand{\Jc}{J_{\mathrm{c}}} 
\newcommand{\Dc}{\Delta_{\mathrm{c}}} 
\newcommand{\Omc}{\Omega_{\mathrm{c}}} 
\newcommand{\ac}{\a_{\mathrm{c}}}
\newcommand{\acs}{\a_{\mathrm{c},\k}}
\newcommand{\Htc}{\widetilde{\mathcal{H}}_{\mathrm{c}}} 
\newcommand{\Ttc}{\widetilde{T}_{\mathrm{c}}}
\newcommand{\Mtc}{\widetilde{\mathcal{M}}_{\mathrm{c},k}} 
\newcommand{\Rtcc}{\widetilde{\mathcal{R}}_{\mathrm{c},k}} 
\newcommand{\Rtccs}{\widetilde{\mathcal{R}}_{\mathrm{c},\k}} 
\newcommand{\Stcs}{\widetilde{S}_{\mathrm{c},\k}} 
\newcommand{\Omtc}{\widetilde{\Omega}_{\mathrm{c}}}
\newcommand{\Vtc}{\widetilde{V}_{\mathrm{c},k}}
\newcommand{\Vtcs}{\widetilde{V}_{\mathrm{c},\k}}
\newcommand{\Mtcs}{\widetilde{\mathcal{M}}_{\mathrm{c},\k}}
\newtheorem{theorem}{Theorem}[section]
\newtheorem{proposition}[theorem]{Proposition}
\newtheorem{lemma}[theorem]{Lemma}
\theoremstyle{remark}
\title{Construction of two-dimensional quantum field models through Longo-Witten endomorphisms}
\date{}
\author{
{\bf Yoh Tanimoto} \footnote{Supported by
Deutscher Akademischer Austauschdienst, Hausdorff Institut f\"ur Mathematik,
Alexander von Humboldt Stiftung and Grant-in-Aid for JSPS fellows 25-205, and in part by
Courant Research Centre ``Higher Order Structures in Mathematics'', G\"ottingen.} \\
e-mail: {\tt yoh.tanimoto@theorie.physik.uni-goettingen.de}\\
Graduate school of mathematical sciences, University of Tokyo / \\
Institut f\"ur Theoretische Physik, Universit\"at G\"ottingen \\
Friedrich-Hund-Platz 1, 37077 G\"ottingen, Germany.\\
}
\begin{document}
\maketitle
\begin{abstract}
We present a procedure to construct families of local, massive and interacting Haag-Kastler nets
on the two-dimensional spacetime through an operator-algebraic method.
An existence proof of local observables is given without relying on modular nuclearity.

By a similar technique, another family of wedge-local nets is constructed using certain endomorphisms of
conformal nets recently studied by Longo and Witten.
\end{abstract}

\section{Introduction}\label{introduction}

In a series of papers \cite{DT11, Tanimoto12-2, BT12} we have investigated
operator-algebraic methods based on conformal field theory to construct quantum
field models on two-dimensional spacetime with a weak localization property.
Although we succeeded to obtain various
examples and general structural results, there was missing one important property:
strict localization. Namely, we have constructed certain operator-algebraic objects
which are considered to represent observables localized in the wedge-shaped regions
in two-dimensional spacetime. But we could not prove the existence of observables in
compactly localized regions.
In the present paper we construct further new families of quantum
field models and prove their strict locality. In other words, we construct interacting
two-dimensional Haag-Kastler nets, using techniques from conformal nets.

In recent years the operator-algebraic approach (algebraic QFT) to quantum field theory has seen
many developments. A fundamental idea is that the whole quantum field model can
be recovered from the set of observables localized in an unbounded wedge-shaped
region and the spacetime translations \cite{Borchers92}. Conversely, a strategy to
construct models is first to obtain models localized in wedges then to prove the
existence of fully localized observables. We say that they are wedge-local and 
strictly local, respectively.
The most successful application of this strategy is the construction of scalar
factorizing S-matrix models in two dimensions \cite{Lechner08}.

Construction of interacting quantum field theory in four dimensions remains one
of the most important open problems in mathematical physics. Further steps toward
higher dimensions within the above operator-algebraic approach have been obtained
\cite{BLS11, Lechner11}, however, the second step to find localized observables
has so far turned out to be unsuccessful \cite{BLS11}.
Furthermore, we found that conformal covariance implies triviality of scattering in two
dimensions \cite{Tanimoto12-1} yet it is possible to construct weakly localized,
conformally covariant models \cite{Tanimoto12-2}. Thus construction of wedge-local
models should be considered as an important but intermediate step.

We have constructed weakly localized massless models using chiral components of
two-dimensional conformal field theory \cite{DT11, Tanimoto12-2, BT12},
however, strict locality is disproved in some of them
and open in the others. The only known operator-algebraic technique to find local observables
is wedge-splitting \cite{BL04} (see Section \ref{general}). This property is apparently difficult to hold
in massless models, especially known to be invalid in conformal case.
Actually, one encounters a similar problem in the form factor bootstrap
program to integrable models \cite{Smirnov92}: one starts with a given S-matrix and
matrix coefficients of local operators are given as solutions of the so-called form factor equations.
One can easily observe that the convergence of these matrix coefficients is much worse
in massless case. Hence, in order to avoid such technical difficulty, one should
try to construct massive models (besides, the convergence of form factors is a
difficult problem even in massive models \cite{BK04}).

Although our previous constructions are fundamentally based on conformal field theory,
one can still find a connection with massive models.
We will investigate this issue more systematically in a separate paper \cite{BT13}.
Here we use again Longo-Witten endomorphisms \cite{LW11} to construct weakly
localized massive models. Then we show that some of them are actually strictly local by
examining wedge-splitting. Note that in \cite{Lechner08} wedge-splitting has been proved through
modular nuclearity \cite{BL04}. We provide both a direct proof of wedge-splitting and
a simple proof of modular nuclearity of certain models.
Once strict locality is demonstrated, one can apply
the standard scattering theory and check that S-matrix is nontrivial.
As we will see, this construction can be considered as a
generalization of the Federbush model. The Federbush model is an integrable model with S-matrix
independent from rapidity \cite{Ruijsenaars82}. Local observables in this model have been found in
\cite{Ruijsenaars83} for certain range of coupling constant.
We construct Haag-Kastler nets which have the same S-matrix as those Federbush models
with the coupling constant of arbitrary size.
The model describes multiple species of particles, hence the strict locality has not been treated in
\cite{Lechner08}. We construct another family of wedge-local nets (Borchers triples)
using Longo-Witten endomorphisms found in \cite{LW11}.
Although strict locality is open for this latter family, two constructions look quite similar.

The present construction uses as a starting point the free field or
any quantum field which gives a wedge-split net.
The new models are constructed on the tensor
product of the Hilbert space of the input field (net).
One could call our construction a ``deformation'' of the trivial combination of
the input fields. The term ``deformation'' had a specific sense in \cite{BLS11} and \cite{Lechner11},
and in massless cases it has been
revealed that the connection with free (trivial scattering) models is generic \cite{Tanimoto12-2}.
In addition, one notes that such  a ``deformation'' is always related to a certain
structure of the input field. The deformation in \cite{BLS11} exploits the translation covariance,
while the chiral decomposition plays a crucial role in \cite{Tanimoto12-2, BT12}.
If one desires further different deformations, it is a natural idea to
introduce a further structure to the ``undeformed'' model. In this paper this idea will be realized using
tensor product and inner symmetry. Such structure is easily found in examples. Indeed,
our procedure in Section \ref{inner} can be applied to the tensor product of Lechner's models \cite{Lechner08}
and immediately gives new strict local nets.

This paper is organized as follows: after introductory Sections \ref{introduction}, \ref{preliminaries},
we first exhibit a way to construct Borchers triples. Section \ref{inner} is of general
nature: given a Borchers triple with certain symmetry, we show a way to
construct new Borchers triples. 
Then we turn to strict locality in Section \ref{general}.
We collect arguments to prove strict locality of the models from Section \ref{inner}.
This is our main result.
Then we apply them to the concrete cases including the complex massive free field to
obtain interacting Haag-Kastler nets (Section \ref{scattering}).
Section \ref{u1} demonstrates how to construct
massive Borchers triples out of Longo-Witten endomorphisms of the $\u1$-current net.
The two-particle S-matrix of these models will be also calculated. We do not investigate
strict locality of these models.

\section{Preliminaries}\label{preliminaries}

\subsection{Two dimensional Haag-Kastler net and Borchers triples}
In algebraic QFT, models of quantum field theory are realized as nets of von Neumann algebras
\cite{Haag96}. A {\bf Haag-Kastler net (of von Neumann algebras)} $\A$ on $\RR^2$ is a map
$O\mapsto \A(O)$ from the set of open regions $\{O\}$ in two dimensional Minkowski space
$\RR^2$ to the set of von Neumann algebras on a fixed Hilbert space $\H$ such that
\begin{enumerate}
\item[(1)] {\bf (Isotony)} If $O_1 \subset O_2$, then $\A(O_1) \subset \A(O_2)$.
\item[(2)] {\bf (Locality)} If $O_1$ and $O_2$ are spacelike separated, then $\A(O_1)$ and $\A(O_2)$
commute.
\item[(3)] {\bf (Poincar\'e covariance)} There is a (strongly continuous) unitary representation $U$ of
the (proper orthochronous) Poincar\'e group $\poincare$ such that $U(g)\A(O)U(g)^* = \A(gO)$
for $g \in \poincare$.
\item[(4)] {\bf (Positivity of energy)} The joint spectrum of the generators of the translation
subgroup ($\cong \RR^2$) in the representation $U$ is contained in the closed forward lightcone
$V_+ := \{(p_0,p_1) \in \RR^2: p_0 \ge |p_1|\}$.
\item[(5)] {(\bf Vacuum vector)} There is a (up to scalar) unique vector $\Omega$ which is invariant
under $U(g)$ and cyclic for $\A(O)$.
\end{enumerate}
Note that the condition on the vacuum $\Omega$ contains the Reeh-Schlieder property,
which can be derived if one assumes the additivity of the net. We take this as an axiom for net
for simplicity (see the discussion in \cite[Section 2]{Weiner11}). From these assumptions,
the following is automatic \cite{Baumgaertel}.
\begin{enumerate}
\item[(6)] {\bf (Irreducibility)} The von Neumann algebra $\bigvee_{O \subset \RR^2} \A(O)$ is
equal to $B(\H)$.
\end{enumerate}
Precisely, the triple $(\A, U, \Omega)$ should be called a Haag-Kastler net, however we say $\A$ is a
net if no confusion arises.
If one has a net $\A$, under certain conditions one can define S-matrix (see Section
\ref{scattering}). The main objective in this paper is to construct examples of two-dimensional
Haag-Kastler nets with nontrivial S-matrix.

Yet, it appears very difficult to construct such nets from a scratch.
Fortunately, Borchers showed that it suffices to have a triple of a {\it single}
von Neumann algebra associated with the wedge-shaped region $W_\R := \{a\in\RR^2: a_1 > |a_0|\}$,
the spacetime translations and the vacuum.
More precisely, a {\bf Borchers triple} is a triple $(\M,T,\Omega)$ of a von Neumann algebra
$\M$, a unitary representation $U$ of $\RR^2$ and a vector $\Omega$ such that
\begin{enumerate}
 \item[(1)] If $a\in W_\R$, then $\Ad T(a)(\M) \subset \M$.
 \item[(2)] The joint spectrum of the generators of $T$ is contained in $V_+$.
 \item[(3)] $\Omega$ is cyclic and separating for $\M$.
\end{enumerate}
The correspondence between nets and Borchers triples is as follows (see \cite{Borchers92, Lechner08}): 
If $(\A, U, \Omega)$ is a (two-dimensional) Haag-Kastler net, then $(\A(W_\R), U|_{\RR^2}, \Omega)$
is a Borchers triple, where $U$ is restricted to the translation subgroup $\cong \RR^2$.

On the other hand, if $(\M, T, \Omega)$ is a Borchers triple, then
one can define a net as follows: in the two-dimensional Minkowski space any double cone $D$ is
represented as the intersection of two wedges $D = (W_\R + a) \cap (W_\L + b)$, where
$W_\L$ is the standard left wedge $W_\L := \{a\in\RR^2: -a_1 > |a_0|\}$. Then we define
for double cones $\A(D) := \Ad T(a)(\M) \cap \Ad T(b)(\M')$. For a general region $O$ we take
$\A(O) := \bigvee_{D \subset O} \A(D)$, where the union runs over all the double cones contained in
$O$. Then one can observe that $\A$ satisfies isotony and locality.
Borchers further proved that the representation $T$ of $\RR^2$ extends to a representation $U$ of
$\poincare$ through Tomita-Takesaki theory \cite{Borchers92} such that $\A$ is covariant
and $\Omega$ is invariant under $U$ (the representation of boosts is given
by the modular group). Positivity of energy is inherited from $T$ of the Borchers triple.
The only missing property is cyclicity of $\Omega$ for local algebras $\A(O)$.
Indeed, there are examples of Borchers triples which fail to satisfy this cyclicity \cite[Theorem 4.16]{Tanimoto12-2}.

Hence, a general strategy to construct Haag-Kastler nets is first to construct Borchers triples
and then to check cyclicity of the vacuum. This program has been first completed in \cite{Lechner08}.
Note that the latter condition is actually very hard to check directly. In \cite{Lechner08}, Lechner
proved instead the so-called modular nuclearity \cite{BL04},
which is a sufficient condition for the cyclicity of the vacuum. In this paper, we will provide
new examples of Borchers triples for which the cyclicity of the vacuum can be proved without relying
on modular nuclearity (see Section \ref{split}). We say that a Borchers triple is {\bf strictly local}
if $\Omega$ is cyclic for $\A(O)$ constructed as above. A strictly local Borchers triple
corresponds to a Haag-Kastler net.

If there is a unitary operator $V$ which commutes with $U(g)$ and if $\Ad V$ preserves
each local algebra $\A(O)$, then we call $\Ad V$ an {\bf inner symmetry} of the net $\A$.
By the uniqueness of $\Omega$, $\Ad V$ preserves the vacuum state $\<\Omega, \cdot\,\Omega\>$.
By the irreducibility, the automorphism $\Ad V$ is implemented uniquely by $V$ up to a scalar.
We require always that $V\Omega = \Omega$. By this requirement, the implementation is unique.
A general strategy to use them in order to construct new Borchers triples will be explained
in Section \ref{inner} and concrete examples of inner symmetry will be discussed in Section \ref{scattering}.
A collection of inner symmetries may form a group $G$. In such a case
we say that $G$ acts on the net by inner symmetry.
Similarly one can consider an action of $V$ on Borchers triples. In this case, one says that $V$ implements
an inner symmetry if $\Ad V$ preserves $\M$ and commutes with $T$.
We say also that $G$ acts by inner symmetry when such $V$'s form a group.

\subsection{Conformal nets and Longo-Witten endomorphisms}
In some of our constructions, the main ingredients come from chiral conformal field theory.
Let us summarize here its operator-algebraic treatment (see also \cite{GF93, Longo08}).
An open, connected, non dense and non empty subset $I$ of $S^1$ is called an {\bf interval}
in $S^1$. We identify $\RR$ as a dense subset in $S^1$ by the stereographic projection.
The M\"obius group $\psl2r$ acts on $S^1 = \RR\cup\{\infty\}$ and under this identification it contains
translations and dilations of $\RR$.
A {\bf (M\"obius covariant) net (of von Neumann algebras) on $S^1$} is an assignment of
von Neumann algebras $\A_0(I)$ on a common Hilbert space $\H_0$ to intervals $I$
such that
\begin{enumerate}
\item[(1)] {\bf (Isotony)} If $I_1 \subset I_2$, then $\A_0(I_1) \subset \A_0(I_2)$.
\item[(2)] {\bf (Locality)} If $I_1$ and $I_2$ are disjoint, then $\A_0(I_1)$ and $\A_0(I_2)$ commute.
\item[(3)] {\bf (M\"obius covariance)} There is a (strongly continuous) unitary representation $U_0$ of
$\psl2r$ such that $U_0(g)\A_0(I)U_0(g)^* = \A_0(gI)$ for $g \in \psl2r$.
\item[(4)] {\bf (Positivity of energy)} The generator of the subgroup of translation in the representation
$U_0$ has positive spectrum.
\item[(5)] {\bf (Vacuum vector)} There is a (up to scalar) unique vector $\Omega_0$ which is invariant
under $U_0(g)$ and cyclic for $\A_0(I)$.
\end{enumerate}
As in two dimensions, we call $\A_0$ a net, however the actual object of interest is the triple
$(\A_0, U_0, \Omega_0)$. From these assumptions many properties automatically follow, among which
of importance in our application are
\begin{enumerate}
\item[(6)] {\bf (Irreducibility)} The von Neumann algebra $\bigvee_{I \subset S^1} \A_0(I)$ is
equal to $B(\H_0)$.
\item[(7)] {\bf (Haag duality on $S^1$)} It holds that $\A_0(I)' = \A_0(I')$, where $I'$ denotes the
interior of the complement of $I$ in $S^1$.
\item[(8)] {\bf (Bisognano-Wichmann property)} The modular group $\Delta_0^{it}$ of $\A_0(\RR_+)$
associated with $\Omega_0$ is equal to $U_0(\delta(-2\pi t))$, where $\delta$ is the dilation in
$\psl2r$.
\end{enumerate}

An {\bf inner symmetry} of a conformal net $\A_0$ is a collection of automorphisms of each
local algebra $\A_0(O)$ implemented by a common unitary operator $V_0$ 
which preserves the vacuum state $\<\Omega_0, \cdot\,\Omega_0\>$.
In conformal (M\"obius covariant) case, it automatically follows that $V_0$ commutes with $U_0(g)$ thanks to
Bisognano-Wichmann property.

A {\bf Longo-Witten endomorphism} of $\A_0$ is an endomorphism of $\A_0(\RR_+)$, implemented
by a unitary operator $V_0$ which commutes with $U_0(g)$, where $g$ is a translation. An inner symmetry
restricted to $\A_0(\RR_+)$ is a Longo-Witten endomorphism.

We will give a concrete example of conformal net and of Longo-Witten endomorphisms in Section \ref{u1},
which are not inner symmetries.
Note that the term ``Longo-Witten endomorphism'' in this paper does not mean
the concrete family found in \cite{LW11}. In fact, in \cite{LW11} it has been pointed out
that to a Longo-Witten endomorphism of a conformal net there is a time-translation
covariant net of von Neumann algebra on the half-plane in $\RR^2$.
A further family of examples has been found in \cite{BT12}.

\subsection{The massive scalar free field}\label{free}
Our main construction strategy is based on simpler examples with certain properties.
Let us quickly review the simplest quantum field.

The free field is constructed from an irreducible representation of the Poincar\'e group
through second quantization. We use the notation which is to some extent consistent with
\cite{LS12}. The one-particle Hilbert space is $\H_1 := L^2(\RR, d\theta)$ and for
the mass $m > 0$ the (proper orthochronous) Poincar\'e group acts by
$U_1(a,\lambda)\psi(\theta) = e^{ip(\theta)\cdot a}\psi(\theta-\lambda)$, where
$p(\theta) := (m\cosh(\theta), m\sinh(\theta))$ parametrizes the mass shell.
We introduce the (auxiliary) unsymmetrized Hilbert space $\H^\Sigma := \bigoplus \H_1^{\otimes n}$ and
the (physical) symmetrized Hilbert space $\Hr := \bigoplus Q_n \H_1^{\otimes n}$, where $Q_n$
is the projection onto the symmetric subspace.

Let us denote the $n$-th component of a vector $\Psi \in \Hr$ by $(\Psi)_n$.
For a vector $\psi \in \H_1$ and $\Psi$ which has only finitely many components,
the creation operator $b^\dagger$ is defined by 
$(b^\dagger(\psi)\Psi)_n = \sqrt{n}Q_n(\psi\otimes \Psi_{n-1})$. The annihilation operator
is the adjoint $b(\psi) = b^\dagger(\psi)^*$.
With this notation, $b^\dagger$ is linear and $b$ is antilinear with respect to $\psi$.
The free quantum field $\phi$ is now given by
\[
\phi(f) := b^\dagger(f^+) + b(J_1f^-), \;\;\;\;\;\;\;\;\;\; f^\pm(\theta) = \frac{1}{2\pi}\int d^2a f(a)e^{\pm ip(\theta)\cdot a},
\]
where $f$ is a test function in $\mathscr{S}(\RR^2)$ and $J_1\psi(\theta) = \overline{\psi(\theta)}$.
This field is local, in the sense that if
$f$ and $g$ have spacelike separated supports, then $\phi(f)$ and $\phi(g)$ commute on an
appropriate domain.

Finally we introduce the free field net. For each open region $O\subset \RR^2$,
we set
\[
\Ar(O) := \{e^{i\phi(f)}: \supp f \subset O\}''.
\]
We have the second quantized representation $\Ur := \Gamma(U_1)$ and the Fock vacuum
vector $\Omr \in \Hr$. This triple $(\Ar,\Ur,\Omr)$ is a two-dimensional
Haag-Kastler net and referred to as the {\bf free massive scalar net}.
The subscript r is intended for the {\it real} scalar field.
Furthermore, this net satisfies the modular nuclearity \cite[Section 4]{BL04}, a property which we will
explain in more detail in Section \ref{general}.

\subsection{Scalar factorizing S-matrix models}\label{lechner}
The free net has many important features, but is not interacting.
Lechner has constructed a large family of interacting nets \cite{Lechner08}.
Here we only briefly summarize the construction and their fundamental properties.

Let $S_2(\theta)$ be a bounded analytic function
on the strip $0 < \Im \theta < \pi$, continuous on the boundary, with the property
\[
S_2(\theta)^{-1} = \overline{S_2(\theta)} = S_2(-\theta) = S_2(\theta+i\pi)
\]
for $\theta \in \RR$. This is called the two-particle scattering function.

This time again the construction of the (wedge-local) field and the net is based on
the one-particle space $\H_1$ above. There is a representation of the symmetric group
$\mathfrak S_n$ on $\H_1^{\otimes n}$. For $\Psi_n \in \H_1^{\otimes n}$, the ``$S_2$-transposition''
is given by
\[
(D_{S_2,n}(\tau_j)\Psi)(\theta_1,\cdots, \theta_n) = S_2(\theta_{j+1}-\theta_j)\Psi(\theta_1,\cdots,\theta_{j+1},\theta_j,\cdots,\theta_n).
\]
for $\tau_j$ which transposes $j$ and $j+1$, and this generate a representation of
$\mathfrak S_n$.

Let $Q_{S_2,n}$ be the orthogonal projection onto the subspace of $\H_1^{\otimes n}$ invariant
under $\{D_{S_2,n}(\tau_j): 1 \le j \le n-1\}$.
The full Hilbert space is $\H_{S_2} := \bigoplus Q_{S_2,n} \H_1^{\otimes n}$ and the representation $U_1$
promotes to $U_{S_2}$ on $\H_{S_2}$ by the second quantization.
We define similarly the creation and annihilation operators
$(z_{S_2}^\dagger(\psi)\Phi)_n = \sqrt{n}Q_{S_2,n}(\psi\otimes \Phi_{n-1})$
and $z_{S_2}(\psi) = z_{S_2}^\dagger(\psi)^*$. 
The quantum field is defined accordingly by
\[
\phi_{S_2}(f) := z_{S_2}^\dagger(f^+) + z_{S_2}(J_1f^-), \;\;\;\;\;\;\;\;\;\;
f^\pm(\theta) = \frac{1}{2\pi}\int d^2a f(a)e^{\pm ip(\theta)\cdot a},
\]
but this time $\phi_{S_2}$ is only wedge-local, hence the net is defined through Borchers triple.

The von Neumann algebra $\M_{S_2}$ is defined by
\[
\M_{S_2} := \{e^{i\phi_{S_2}(f)}: \supp f \subset W_\L\}'
\]
(note that here we take only the single commutant and $f$ has support in $W_\L$,
while $\M_{S_2}$ corresponds to $W_\R$. By the wedge-duality, this is just a matter of convention
(see \cite{LS12})).
The triple $(\M_{S_2}, U_{S_2}, \Omega_{S_2})$ is a Borchers triple \cite{Lechner03}.
Furthermore, the modular nuclearity holds if $S_2$ fulfills a certain regularity condition
and $S_2(0) = -1$ \cite{Lechner08},
hence the triple is wedge-split in those cases.

\section{Borchers triples through inner symmetries}\label{inner}
In this Section we make the first step in our main construction in two dimensions.
We start with a Borchers triple with inner symmetry and construct new triples.
Note that, differently from our previous results \cite{Tanimoto12-2, BT12}, the new triples are
defined on a different Hilbert space although the formulae look very similar.

First we treat the case where a given triple admits an action of $S^1$ by inner symmetry.
Let us state a key lemma, which can be obtained as a special case of \cite[Lemma 4.1]{Tanimoto12-2}.
\begin{lemma}\label{lm:commutativity-s}
Let $\Mc$ be a von Neumann algebra and $\Qc$ a self-adjoint operator such that
$\Ad e^{i2\pi \k\Qc}(\Mc) = \Mc$ for any $\k\in \RR$.
Then, for $\k\in \RR$,
$x\otimes \1$ commutes with $\Ad e^{i2\pi\k \Qc\otimes \Qc}(x'\otimes \1)$
where $x\in \Mc$ and $x'\in\Mc'$.
\end{lemma}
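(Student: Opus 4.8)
The plan is to use the spectral decomposition of $\Qc$ into integer eigenspaces to reduce the statement to a commutation relation that follows from locality in $\Mc$. Write $\Qc = \sum_{n\in\ZZ} n\,P_n$, where $P_n$ is the spectral projection of $\Qc$ onto the eigenvalue $n$ (with $\sum_n P_n = \1$). Then $e^{i2\pi\k\Qc\otimes\Qc} = \sum_{m,n} e^{i2\pi\k mn}\,P_m\otimes P_n$ as a strongly convergent sum. First I would compute, for $x'\in\Mc'$,
\[
\Ad e^{i2\pi\k\Qc\otimes\Qc}(x'\otimes\1) = \sum_{m,n} e^{i2\pi\k mn}\,(P_m\otimes P_n)(x'\otimes\1)(P_{m'}\otimes P_{n'})\big|_{\text{summed over }m',n'},
\]
but since $\1$ in the second leg commutes with all $P_n$, the second-leg projections collapse and one is left with
\[
\Ad e^{i2\pi\k\Qc\otimes\Qc}(x'\otimes\1) = \sum_{n} \Big(\sum_{m,m'} e^{i2\pi\k mn}P_m x' P_{m'}\Big)\otimes P_n = \sum_n \Ad e^{i2\pi\k n\Qc}(x')\otimes P_n,
\]
the last equality using $e^{i2\pi\k n\Qc}=\sum_m e^{i2\pi\k mn}P_m$. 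Here I use the hypothesis that the sum over $m$ converges appropriately; because $\sp\Qc\subset\ZZ$ the exponentials are well-defined and the partial sums converge strongly on the dense domain where only finitely many $P_m$ components are nonzero, and then extend by boundedness.

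Next I would check commutativity termwise against $x\otimes\1$. Since $x\otimes\1$ commutes with $\1\otimes P_n$, it suffices to show that $x$ commutes with $\Ad e^{i2\pi\k n\Qc}(x')$ for each fixed $n\in\ZZ$. But $\k n\in\RR$, so by the hypothesis $\Ad e^{i2\pi(\k n)\Qc}(\Mc)=\Mc$ (applied with the real parameter $\k n$ in place of $\k$), and since $x'\in\Mc'$ we get $\Ad e^{i2\pi\k n\Qc}(\Mc')=\Mc'$; hence $\Ad e^{i2\pi\k n\Qc}(x')\in\Mc'$ commutes with $x\in\Mc$. Assembling the terms, $x\otimes\1$ commutes with the whole series, which is the claim.

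The main technical point to get right — and the step I expect to be the real obstacle — is the convergence and interchange of the (generally infinite) spectral sums: justifying that $\Ad e^{i2\pi\k\Qc\otimes\Qc}(x'\otimes\1) = \slim_N \sum_{|n|\le N}\Ad e^{i2\pi\k n\Qc}(x')\otimes P_n$ with the limit in the strong operator topology, and that a bounded operator commuting with every term of a strongly convergent sum commutes with the sum. The cleanest route is to fix a vector $\xi$ in the algebraic direct sum of the $P_n$-eigenspaces in the second leg (a core), verify the identity and the commutation on such $\xi$ by a finite computation, and then pass to general vectors using $\|P_n\|\le 1$ and a standard $\e/3$ argument; alternatively one may invoke that this is the $\ZZ$-grading special case of \cite[Lemma 4.1]{Tanimoto12-2}, where the analogous bookkeeping has already been carried out.
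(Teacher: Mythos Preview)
Your proof is correct and is essentially the same approach as the paper's: the paper simply observes that $x$ commutes with $\Ad e^{i2\pi\k\Qc}(x')$ for every $\k\in\RR$ and then invokes \cite[Lemma~4.1]{Tanimoto12-2}, whereas you unpack the spectral-decomposition argument underlying that lemma (and also mention the citation as an alternative). Your worry about convergence is overstated---since the $P_n$ are mutually orthogonal projections summing strongly to $\1$ and the terms $\Ad e^{i2\pi\k n\Qc}(x')\otimes P_n$ have orthogonal ranges in the second leg with uniformly bounded norm $\|x'\|$, the strong convergence and the interchange with the bounded operator $x\otimes\1$ are immediate.
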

\begin{proof}
For any $\k \in \RR$, $x$ commutes with
$\Ad e^{i2\pi\k\Qc}(x')$ by assumption, hence the commutativity lemma \cite[Lemma 4.1]{Tanimoto12-2}
applies.
\end{proof}

Note that our assumption is for every $\k \in \RR$, since in the partial
spectral decomposition in the proof of \cite[Lemma 4.1]{Tanimoto12-2},
there appears the action $\Ad e^{i2\pi q\k\Qc}$ on the left component, $q\in\RR$.
Actually, if the spectrum of $\Qc$ is containd in a subset $X \subset \RR$,
then it is enough to assume that $\Ad e^{i2\pi q\k\Qc}(\Mc') \subset \Mc'$ for any $q\in X$. We will treat concrete cases where $\sp \Qc \subset \ZZ$.
Here we put the factor $2\pi$ in order to keep the notations homogeneous
to the actions of $\ZZ_N$ considered later.

Now let $(\Mc, \Tc, \Omc)$ be a Borchers triple with an action of $S^1 = \RR/\ZZ$
by inner symmetry.
There is a unique unitary operator $\Vc(\k)$ which implements the inner symmetry by $S^1$.
One take the generator $\Qc$ such that $\Vc(\k) = e^{i2\pi \k\Qc}$.
It is clear that $\Qc$ commutes with the translation $\Tc$.
One sees that $\sp \Qc \subset \ZZ$ since $\1 = \Vc(1) = e^{i2\pi\Qc}$.

Now we turn to the construction of Borchers triples. Our objects act on the tensor product Hilbert
space $\Htc := \Hc\otimes \Hc$.
Let us denote $\Vtcs = e^{i 2\pi \k\Qc\otimes \Qc}$.
The tensor product representation $\Ttc(a) := \Tc(a)\otimes \Tc(a)$ has positive spectrum
and preserves the new vacuum $\Omtc := \Omc\otimes \Omc$.
The von Neumann algebra is given respectively by
\begin{eqnarray*}
\Mtcs &:=& \{x\otimes \1, \Ad \Vtcs (\1\otimes y): x,y\in \Mc\}'', \\
\end{eqnarray*}
Note the difference from our previous construction \cite{Tanimoto12-2}: first of all, this time the input
is the two-dimensional Borchers triple, while we used chiral components in \cite{Tanimoto12-2}.
Accordingly, the representation $\Ttc$ is now just the two copies of the given one. The formula for
$\Mtcs$ is also similar, but this time we take both $x$ and $y$ from the same algebra $\Mc$.
\begin{theorem}\label{th:borchers-inner}
The triple $(\Mtcs, \Ttc, \Omtc)$ is a Borchers triple for each $\k\in\RR$.
\end{theorem}
\begin{proof}
As in \cite[Theorem 4.17]{Tanimoto12-2}, the properties of $\Ttc$ and $\Omtc$ are readily checked.
As for the relation $\Ad \Ttc(a) (\Mtcs) \subset \Mtcs$ for $a \in W_\R$, one can prove this by noting
that $\Vtcs$ and $\Ttc(a)$ commute and the assumptions that $\Mc$ and $\Tc$ have the relation.

The cyclicity of $\Omtc$ for $\Mtcs$ follows from the cyclicity of $\Omc$ for $\Mc$
and the fact that $\Vtcs(\1\otimes y)\Omtc = (\1\otimes y)\Omtc$. To show the separating
property, we need again to prepare a sufficiently big algebra in the commutant:
\[
\Mtcs^1 := \{\Ad \Vtc(x'\otimes \1), \1\otimes y': x', y' \in \Mc'\}'' .
\]
It is easy to see that $\Omtc$ is cyclic for $\Mtcs^1$. In order to see that $\Omtc$ is separating
for $\Mtcs$, it is enough to prove that $\Mtcs$ and $\Mtcs^1$ commute.
This follows from Lemma \ref{lm:commutativity-s}, as in \cite[Theorem 4.2]{Tanimoto12-2}.
\end{proof}

A similar construction is possible if a Borchers triple admits an
action of the finite group $\ZZ_N$ by inner symmetry.
For a $\ZZ_N$-action we take $\Vc$ such that $\Vc^k$ implements the inner symmetry for $k\in\ZZ_N$.
One can choose $\Qc$ such that $\Vc(\k) = e^{i2\pi \k\Qc}$ or $\Vc = e^{i\frac{2\pi}{N}\Qc}$,
respectively. We remark that we can concretely choose $\Qc$ as follows: for each integer $j \in [0,N-1]$
we put $\hat \Vc(j) := \frac{1}{N}\sum_k e^{-i\frac{2\pi jk}{N}}\Vc(k)$. It is clear that $\hat \Vc(j)$
is an orthogonal projection and $\hat \Vc(j) \hat \Vc(l) = 0$ if $j\not\equiv l \mod N$. In other words,
$k\mapsto \Vc^k$ is a representation of $\ZZ_N$, whose dual group is also $\ZZ_N$ and we extract
the spectral projections. Then we can take $\Qc := \sum_j j \hat \Vc(j)$.
By this construction, it is clear that $\Qc$ commutes with the translation $\Tc$.
An operator $\Qc$ with the properties specified above is not unique since one can amplify each spectral
component by $N$, but if we consider the operator $e^{i\frac{2\pi k}{N}\Qc\otimes \Qc}$,
it does not depend on the choice of $\Qc$
(see also the remark after \cite[Theorem 4.17]{Tanimoto12-2}).  
One sees that $\sp \Qc \subset \ZZ$ since $\1 = \Vc^N = e^{i2\pi \Qc}$ for a $\ZZ_N$-action (or directly by our choice).

Analogously as Lemma \ref{lm:commutativity-s} one can prove the following.
Here, the assumption is for all intergers and so is the result.
\begin{lemma}\label{lm:commutativity}
Let $N$ be an integer, $\Mc$ a von Neumann algebra and $\Qc$ a self-adjoint operator such that
$\sp \Qc \subset \ZZ$ and $\Ad e^{i\frac{2\pi k}{N} \Qc}(\Mc) = \Mc$ for any $k\in \ZZ$.
Then, for $k\in \ZZ$,
$x\otimes \1$ commutes with $\Ad e^{i\frac{2\pi k}{N} \Qc\otimes \Qc}(x'\otimes \1)$
where $x\in \Mc$ and $x'\in\Mc'$.
\end{lemma}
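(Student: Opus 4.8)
The plan is to imitate the proof of Lemma~\ref{lm:commutativity-s} essentially verbatim, the one point requiring care being that the invariance of $\Mc$ is now assumed only for the discrete family of parameters $\tfrac{2\pi k}{N}$, $k\in\ZZ$, rather than for all real values. The key elementary observation is that this family is stable under multiplication by integers: if $k,m\in\ZZ$ then $km\in\ZZ$, so $\Ad e^{i\frac{2\pi (km)}{N}\Qc}(\Mc)=\Mc$ is again one of the assumed identities. This is the sole difference with the previous lemma, and it is exactly where the hypothesis $\sp\Qc\subset\ZZ$ enters: the eigenvalues of $\Qc$ are integers, so pairing a parameter $\tfrac{2\pi k}{N}$ with an eigenvalue of $\Qc$ keeps it inside the allowed family.

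To make this precise I would decompose the underlying Hilbert space $\Hc$ on which $\Mc$ and $\Qc$ act as $\Hc=\bigoplus_{m\in\ZZ}\Hc_m$, where $\Hc_m$ is the eigenspace of $\Qc$ for the eigenvalue $m$ (legitimate since $\Qc$ is self-adjoint with integer spectrum). Since $x'\otimes\1$ and $e^{i\frac{2\pi k}{N}\Qc\otimes\Qc}$ both preserve $\Hc\otimes\Hc_m$, and $\Qc$ acts as the scalar $m$ on $\Hc_m$, one has on that subspace
\[
\bigl(e^{i\frac{2\pi k}{N}\Qc\otimes\Qc}\bigr)\big|_{\Hc\otimes\Hc_m}=\bigl(e^{i\frac{2\pi km}{N}\Qc}\bigr)\otimes\1,
\]
so $\Ad e^{i\frac{2\pi k}{N}\Qc\otimes\Qc}(x'\otimes\1)$ restricts to $\Ad e^{i\frac{2\pi km}{N}\Qc}(x')\otimes\1$ on $\Hc\otimes\Hc_m$, while $x\otimes\1$ restricts to $x\otimes\1$. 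By the hypothesis applied with the integer $km$, $\Ad e^{i\frac{2\pi km}{N}\Qc}(\Mc)=\Mc$, hence $\Ad e^{i\frac{2\pi km}{N}\Qc}(\Mc')=\Mc'$ and $\Ad e^{i\frac{2\pi km}{N}\Qc}(x')\in\Mc'$ commutes with $x\in\Mc$. Therefore $x\otimes\1$ commutes with $\Ad e^{i\frac{2\pi k}{N}\Qc\otimes\Qc}(x'\otimes\1)$ on each summand $\Hc\otimes\Hc_m$, and summing over $m\in\ZZ$ gives the asserted commutativity on all of $\Hc\otimes\Hc$.

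Equivalently, and closer to the quoted proof, one simply checks that for each fixed $k\in\ZZ$ the element $x$ commutes with $\Ad e^{i\frac{2\pi k}{N}\Qc}(x')$ (since $\Ad e^{i\frac{2\pi k}{N}\Qc}(x')\in\Mc'$ by assumption) and invokes the commutativity lemma \cite[Lemma~4.1]{Tanimoto12-2}, whose proof uses only the unitaries $e^{i\lambda\Qc}$ for $\lambda$ in the parameter set at hand together with the integrality of $\sp\Qc$. I do not expect a genuine obstacle here; the only thing to watch is that one never needs $\Ad e^{i\frac{2\pi t}{N}\Qc}(\Mc)=\Mc$ for non-integer $t$, and the eigenspace bookkeeping above confirms this is not required.
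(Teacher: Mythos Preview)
Your proposal is correct and matches the paper's approach: the paper simply says ``Analogously one can prove the following'' before stating the lemma, meaning the proof of Lemma~\ref{lm:commutativity-s} carries over verbatim by invoking \cite[Lemma~4.1]{Tanimoto12-2}, which is exactly your second formulation. Your first, eigenspace-decomposition argument is just an explicit unpacking of what that cited commutativity lemma does, so there is no genuine difference in strategy.
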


The construction of Borchers triples is also parallel. We take the tensor product Hilbert
space $\Htc := \Hc\otimes \Hc$, the tensor product representation
$\Ttc(a) := \Tc(a)\otimes \Tc(a)$ and the vacuum $\Omtc := \Omc\otimes \Omc$.
We denote $\Vtc = e^{i\frac{2\pi k}{N}\Qc\otimes \Qc}$ and define
\begin{eqnarray*}
\Mtc &:=& \{x\otimes \1, \Ad \Vtc (\1\otimes y): x,y\in \Mc\}''.
\end{eqnarray*}
\begin{theorem}\label{th:borchers-inner-z}
The triple $(\Mtc, \Ttc, \Omtc)$ is a Borchers triple for each $k\in\ZZ_N$.
\end{theorem}

At the end of this Section, we remark that the existence of inner symmetry is not at all
exceptional. Indeed, if $(\Mc, \Tc, \Omc)$ is a (wedge-split) Borchers triple,
then the tensor product $(\Mc\otimes \Mc, \Tc\otimes \Tc, \Omc\otimes \Omc)$
has the flip automorphism which commutes with $\Tc\otimes \Tc$ and preserves
$\Omc\otimes\Omc$, hence it is an inner symmetry $\ZZ_2$.
We present more examples in Section \ref{scattering}.

\section{General arguments for strict locality}\label{general}
We are now concerned with the main problem in the construction of nets through Borchers triples:
the strict locality of the models in Section \ref{inner}.
The key argument is the wedge-split property. 

In general, if one has an inclusion of von Neumann algebras $\N \subset \M$, this is said to be
{\bf split} if there is a type I factor $\Rc$ such that $\N \subset \Rc \subset \M$.
Furthermore, if $\N\subset \M$ is an inclusion of factors and $\Omega$ is cyclic and
separating for $\M$, then the nuclearity of the map $\N \ni x \longmapsto \Delta^{\frac{1}{4}}x\Omega$
implies the split property, where $\Delta$ is the modular operator for $\M$ with respect to $\Omega$
\cite[Proposition 2.3]{BDL90}.

We say that a Borchers triple $(\M, T, \Omega)$ is {\bf wedge-split} if $\Ad T(a)(\M) \subset \M$
is split for any $a\in W_\R$. Note that $W_\R$ is defined as the {\em open} wedge
and split inclusion for lightlike translation is not required.
Wedge-split property implies that the inclusion $\Ad T(a)(\M) \subset \M$ is unitarily equivalent
to $(\M_2\otimes \CC\1) \subset (B(\K_1)\otimes \M_1)$, where $\M_1$ and $\M_2$ cannot be trivial
since $\M$ is of type III, hence
the intersection $\M\cap \Ad T(a)(\M)'$ for $a\in W_\R$ is unitarily equivalent to
$\M_2'\otimes \M_1$ which is nontrivial \cite{BL04}.
Let $\Delta$ be the modular operator for $\M$ with respect to $\Omega$.
One says that an inclusion $\N\subset \M$ satisfies {\bf modular nuclearity} if
the map $\N \ni x \longmapsto \Delta^{\frac{1}{4}}x\Omega$ is nuclear.
A Borchers triple $(\M,T,\Omega)$ is said to satisfy modular nuclearity if
the inclusion $\Ad T(\M) \subset \M$ has modular nuclearity for any $a\in W_\R$.
From the above remark, modular nuclearity implies wedge-split property.
The strict locality of the models in \cite{Lechner08} was proved through modular nuclearity.

Actually, wedge-split inclusion is sufficient for the strict locality \cite[Theorem 2.5]{Lechner08}.
\begin{theorem}[Lechner]\label{th:lechner}
If a Borchers triple $(\M,T,\Omega)$ is wedge-split, then it is strictly local.
\end{theorem}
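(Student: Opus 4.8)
The plan is to verify, for the net $\A$ associated with $(\M,T,\Omega)$ as in Section~\ref{preliminaries}, the one property of a Haag--Kastler net not already built into the Borchers triple: cyclicity of $\Omega$ for the local algebras $\A(O)$. Since $\A(O)=\bigvee_{D\subset O}\A(D)$ and every nonempty open $O$ contains a double cone, it suffices to show that $\Omega$ is cyclic for $\A(D)$ for every double cone $D$. Writing $D=(W_\R+a)\cap(W_\L+b)$ and applying $\Ad T(-a)$, which fixes $\Omega$, reduces this to double cones $D=W_\R\cap(W_\L+c)$ with $c\in W_\R$; for such $D$ one has $\A(D)=\M\cap\Ad T(c)(\M')$. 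Put $\N:=\Ad T(c)(\M)$, so that $\Ad T(c)(\M')=\N'$ and $\A(D)=\N'\cap\M$. Since $W_\R+c\subsetneq W_\R$ for $c\in W_\R$, the inclusion $\N\subset\M$ is proper, and by the wedge-split hypothesis it is split; moreover $\Omega$ is cyclic and separating for $\M$ by the Borchers triple axioms and, since $T(c)$ is unitary with $T(c)\Omega=\Omega$, also for $\N=\Ad T(c)(\M)$.

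The core of the argument is then the algebraic statement that, for the split inclusion $\N\subset\M$ with $\Omega$ cyclic and separating for both algebras, $\Omega$ is cyclic for the relative commutant $\N'\cap\M=\A(D)$. (That $\Omega$ is also separating for $\N'\cap\M$ is immediate, since $(\N'\cap\M)'=\N\vee\M'\supseteq\M'$ and $\Omega$, being separating for $\M$, is cyclic for $\M'$ hence for the larger algebra $\N\vee\M'$.) For cyclicity one uses the split property to produce an intermediate type~I factor $\Rc$, $\N\subset\Rc\subset\M$; by the structure theory of split inclusions (Doplicher--Longo, Buchholz--D'Antoni--Longo) one arranges that $\Omega$ is also cyclic and separating for $\Rc$, and then a tensor decomposition $\H\cong\H_1\otimes\H_2$ identifies $\Rc$ with $B(\H_1)\otimes\1$, $\N$ with $\N_1\otimes\1$ and $\M$ with $B(\H_1)\otimes\M_2$, so that $\A(D)=\N'\cap\M=\N_1'\otimes\M_2$. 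Cyclicity of $\Omega$ for $\N_1'\otimes\M_2$ is then extracted from cyclicity for $\M=B(\H_1)\otimes\M_2$ and for $\N'=\N_1'\otimes B(\H_2)$ (i.e.\ separation for $\N$), together with the fact that $\Omega$ has full Schmidt rank relative to the type~I factor $\Rc$.

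Granting this, $\Omega$ is cyclic for $\A(D)$ for every double cone $D$, hence for $\A(O)\supseteq\A(D)$ for every nonempty open $O$; that is, $(\M,T,\Omega)$ is strictly local. (Combined with isotony, locality, the Poincar\'e covariance furnished by Borchers' construction and positivity of energy inherited from $T$, the associated $(\A,U,\Omega)$ is then a two-dimensional Haag--Kastler net.)

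The main obstacle I anticipate is precisely the cyclic half of the core statement --- that the split inclusion of wedge algebras is \emph{standard}. The bare split property plus cyclicity and separation of $\Omega$ for $\N$ and $\M$ alone does not force this in full generality: finite-dimensional inclusions with $\N=\M$ a type~I factor have relative commutant $\CC$, for which $\Omega$ is not cyclic. One must therefore use that $\N\subset\M$ is a \emph{proper} inclusion of wedge algebras embedded in the whole nested family $\{\Ad T(tc)(\M)\}_{t\ge 0}$, where positivity of energy and the Bisognano--Wichmann property (automatic for Borchers triples) exclude such degeneracies and ensure that the Doplicher--Longo canonical intermediate type~I factor carries $\Omega$ as a cyclic and separating vector. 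The remaining steps --- the covariance bookkeeping in the reduction and the passage from double cones to arbitrary open regions --- are routine.
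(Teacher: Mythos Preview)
The paper does not give its own proof of this statement; it is quoted from \cite[Theorem~2.5]{Lechner08}, with the remark that although Lechner states the result under the stronger hypothesis of modular nuclearity, his argument in fact uses only the wedge-split property. There is thus no proof in the present paper to compare your attempt against --- you are effectively trying to reconstruct Lechner's argument.

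That said, your outline has the right shape and correctly isolates the one nontrivial point: cyclicity of $\Omega$ for the relative commutant $\N'\cap\M$ of the split wedge inclusion. Your reductions before and after that step --- translating the double cone to the origin, identifying $\A(D)=\N'\cap\M$, the separating half via $\N\vee\M'\supset\M'$, the tensor picture $\N'\cap\M\cong\N_1'\mathbin{\overline{\otimes}}\M_2$ through an intermediate type~I factor, and the passage from double cones to general regions --- are all correct. Your observation that the split property together with a joint cyclic and separating vector for $\N$ and $\M$ does \emph{not} by itself force cyclicity for $\N'\cap\M$ (the degenerate $\N=\M$ type~I counterexample) is also apt.

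What is missing is precisely the argument closing this gap. Your final paragraph names plausible ingredients --- the nested family $\{\Ad T(tc)(\M)\}_{t\ge 0}$, Bisognano--Wichmann --- but does not explain how they combine to make the split inclusion \emph{standard} in the Doplicher--Longo sense, and the phrase ``exclude such degeneracies'' is not a proof. This is exactly the step Lechner carries out, and it is the whole content of the theorem; everything else is bookkeeping. If you want to complete the argument you must either supply that step explicitly or cite it, rather than leave it as an anticipated obstacle.
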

This theorem is stated with the assumption of modular nuclearity, however, the actual proof
depends only on wedge-split property.

Let us recall our main construction strategy (Section \ref{inner}): starting with a given Borchers triple
$(\Mc,\Tc,\Omc)$ with inner symmetry $\Ad \Vc(\k)$ of $S^1$,
we constructed a new Borchers triple $(\Mtcs,\Ttc,\Omtc)$ on the tensor
product Hilbert space. As we will see later (Section \ref{scattering}), this gives a nontrivial scattering
even if the initial triple comes from the free field. This can be generalized to the following program:
take a Borchers triple with a good property (either wedge-split property or modular nuclearity)
and prove that the construction of Section \ref{inner} leads again to Borchers triples with the same
property.

We carry out this program in two ways. First we present a proof through wedge-split property and
then we use modular nuclearity.
Note that strict locality is not ``good enough'' for this program. We will exhibit examples
of strictly local Borchers triples for which the new triples constructed as in Section \ref{inner}
violate strict locality.
Another remark is that our arguments in Section \ref{split} through wedge-split property are valid
for both $S^1$- and $\ZZ_N$-actions,
but those in Section \ref{nuclearity} through modular nuclearity apply so far only to finite cyclic group actions.
Already for the simplest compact group $S^1$ the proofs break down, as we will see.

\subsection{Common arguments}\label{common}
Let us collect some facts which will be commonly used in the following
(see also \cite[Section 4.4.1]{Tanimoto12-2}).
In this Subsection, we do not use the properties of Borchers triple.

In the following, we consider only an action of $S^1$. The case of $\ZZ_N$ is analogous.
Let $\Mc$ be a von Neumann algebra, $\Omc$ a cyclic separating vector for $\Mc$ and
$\Vc(\k)$ be a unitary representation of $S^1$ which preserves $\Omc$ such that
$\acs := \Ad \Vc(\k)$ is an automorphism of $\Mc$ and $\Vc(1) = \1$.
In other words, there is an action
$\ac$ of $S^1$ on $\Mc$ which preserves the state $\<\Omc,\cdot\,\Omc\>$.

\subsubsection*{Discrete Fourier expansion on von Neumann algebras}
For any element $x\in\Mc$ and $l \in \ZZ$, we define
\[
x_l := \int_0^1 d\k\, \Ad \Vc(\k)(x)e^{-i2\pi l\k},
\]
It holds that $x = \sum_{l \in \ZZ} x_l$ and $\Ad \Vc(\k)(x_l) = \acs(x_l) = e^{i2\pi l\k}x_l$.
The sum is $*$-strongly convergent.
Let us denote by $\Mc^{\ac}(l)$ the set of elements $x$ of $\Mc$ such that
$\acs(x) = e^{i2\pi l\k}x$. Then it holds that $x_l \in \Mc^{\ac}(l)$ and
$\Mc^{\ac}(l_1)\cdot\Mc^{\ac}(l_2) \subset \Mc^{\ac}(l_1+l_2)$.

These are well-known facts, however, if necessary the reader is referred to
\cite[Proposition 4.9]{Tanimoto12-2} for a proof (for the case of $\ZZ_N$, the proof is
easy to adapt).

Next we consider the twisting. As in Section \ref{inner}, let $\Qc$ be a self-adjoint
operator such that $\Vc(\k) = e^{i2\pi \k\Qc}$.
Then for $y \in \Mc^{\ac}(m)$, it holds that
\[
\Ad \Vtcs(\1\otimes y) = \Ad e^{i2\pi \k\Qc\otimes \Qc}(\1\otimes y) = e^{i2\pi m\k\Qc}\otimes y
= \Vc(m\k)\otimes y,
\]
where $\Vtcs = e^{i2\pi \k\Qc\otimes \Qc}$ as in Section \ref{inner}.
For the proof, see \cite[Lemma 4.10]{Tanimoto12-2}. The proof is written again for
an action of $S^1$, however, the adaptation is easy for $\ZZ_N$. Note that one has to consider
spectral subspaces of $\Vc$ parametrized by $\ZZ_N$. The identification
$\ZZ_N = \ZZ / N\ZZ$ should be always kept in mind and numbers, e.g.,  $e^{i\frac{2\pi k}{N}}$
are well-defined for $k\in\ZZ_N$.

Finally we remark that $\Mtcs$ is the $*$-strong closure of the linear span of
the elements of the form $x_l\Vc(m\k)\otimes y_m$, where $x_l\in \Mc^{\ac}(l)$ and
$y_m\in\Mc^{\ac}(m)$.
Furthermore,  any element $\widetilde{x} \in \Mtcs$ can be decomposed as follows:
\[
 \widetilde{x} = \sum_{l,m \in \ZZ} \widetilde{x}_{l,m} (\Vc(m\k)\otimes \1),
 \,\,\,\,\,\,\, \widetilde{x}_{l,m} \in \Mc\otimes \Mc,
\]
and the sum is $*$-strongly convergent.
This decomposition corresponds to the discrete Fourier expansion with respect
to the action $(\Ad \Vc)\otimes(\Ad \Vc)$ of the group $S^1\times S^1$.
The group $S^1\times S^1$ acts also on $\Mc\otimes \Mc$ and there is a decomposition
into components as well.
For a corresponding proof, see again \cite[Lemma 4.14]{Tanimoto12-2}.

\subsubsection*{Modular operator and inner symmetry}
Let $\Dc$ and $\Jc$ be the modular operator and the modular conjugation
for $\Mc$ with respect to $\Omc$. The implementing unitary $\Vc$ of $\ac$ and
$\Dc, \Jc$ commute. 
The modular operator and the modular conjugation for $\Mc\otimes\Mc$ with
respect to $\Omtc := \Omc\otimes\Omc$ are $\Dc\otimes \Dc$ and
$\Jc\otimes \Jc$, respectively. Now we can determine the modular objects
for $\Mtcs = \Mc\otimes \1 \vee \Ad \Vtcs (\1\otimes \Mc)$.
\begin{proposition}\label{pr:modular}
The operators $\Dc\otimes\Dc$ is the modular operator for $\Mtc$ with respect to
$\Omtc$.
\end{proposition}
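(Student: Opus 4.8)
The plan is to identify $\Dc\otimes\Dc$ as the modular operator of $(\Mtcs,\Omtc)$ by showing that the one-parameter group $\s_t:=\Ad(\Dc^{it}\otimes\Dc^{it})$ restricts to the modular automorphism group of $\Mtcs$; since $\Dc\otimes\Dc$ is positive and self-adjoint and $(\Dc\otimes\Dc)^{it}\Omtc=\Omtc$, uniqueness of the implementer then forces $\Dc\otimes\Dc=\Delta_{\Mtcs}$. First I would record the structural facts. As $\Vc(\k)=e^{i2\pi\k\Qc}$ commutes with $\Dc^{it}$ for every $\k$, so does $\Qc$, whence $\Dc^{it}\otimes\Dc^{it}$ commutes with $\Vtcs=e^{i2\pi\k\Qc\otimes\Qc}$; therefore $\s_t$ sends $x\otimes\1\mapsto\Ad\Dc^{it}(x)\otimes\1$ and $\Ad\Vtcs(\1\otimes y)\mapsto\Ad\Vtcs(\1\otimes\Ad\Dc^{it}(y))$, so $\s_t(\Mtcs)=\Mtcs$, and $\s_t$ fixes $\Omtc$ because $\Dc^{it}\Omc=\Omc$. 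Moreover $\Ad\Dc^{it}$ preserves each spectral subspace $\Mc^{\ac}(l)$, since it commutes with $\Ad\Vc(\k)$. By Theorem \ref{th:borchers-inner}, $\Omtc$ is cyclic and separating for $\Mtcs$, so $\Delta_{\Mtcs}$ exists and the only remaining task is to verify the KMS condition for $\s_t$ and $\<\Omc\otimes\Omc,\cdot\,\Omc\otimes\Omc\>$.

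It suffices to check the KMS relation on the $\sigma$-weakly dense $*$-subalgebra $\D$ of $\Mtcs$ spanned by the Fourier components $x_l\Vc(m\k)\otimes y_m$ with $x_l\in\Mc^{\ac}(l)$, $y_m\in\Mc^{\ac}(m)$ entire for the modular flow of $(\Mc,\Omc)$: by the product rule $\Mc^{\ac}(l_1)\Mc^{\ac}(l_2)\subseteq\Mc^{\ac}(l_1+l_2)$ and the relation $\Vc(m\k)a=e^{i2\pi pm\k}a\,\Vc(m\k)$ for $a\in\Mc^{\ac}(p)$, $\D$ is closed under products and adjoints, and it consists of entire elements for $\s_t$. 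For $\widetilde x=x_l\Vc(m\k)\otimes y_m$ and $\widetilde y=x'_{l'}\Vc(m'\k)\otimes y'_{m'}$ in $\D$, using these relations together with $\Vc(\k)\Omc=\Omc$ one obtains
\[
\<\Omtc,\widetilde x\,\s_z(\widetilde y)\,\Omtc\>=e^{i2\pi l'm\k}\,\<\Omc,x_l\,\Ad\Dc^{iz}(x'_{l'})\,\Omc\>\,\<\Omc,y_m\,\Ad\Dc^{iz}(y'_{m'})\,\Omc\>,
\]
and the analogous reordering gives $\<\Omtc,\s_t(\widetilde y)\,\widetilde x\,\Omtc\>=e^{i2\pi lm'\k}\<\Omc,\Ad\Dc^{it}(x'_{l'})\,x_l\,\Omc\>\<\Omc,\Ad\Dc^{it}(y'_{m'})\,y_m\,\Omc\>$. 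Each scalar factor on the right extends analytically in the strip and obeys the KMS relation of $(\Mc,\Omc)$, hence so do the products; it thus remains only to see that the two phase prefactors agree on the contributing terms.

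This agreement is the crux and the one real obstacle. Because $\Vc(\k)\Omc=\Omc$, the state $\<\Omc,\cdot\,\Omc\>$ is $\ac$-invariant, so $\<\Omc,x_l\,\Ad\Dc^{iz}(x'_{l'})\,\Omc\>$, whose argument lies in $\Mc^{\ac}(l+l')$, vanishes unless $l'=-l$, and likewise the $y$-factor forces $m'=-m$. On exactly these sectors $e^{i2\pi l'm\k}=e^{-i2\pi lm\k}=e^{i2\pi lm'\k}$, so the twisting phases cancel and the KMS identity on $\D$ reduces factor by factor to that of $(\Mc,\Omc)$, which holds. A standard density argument then extends the KMS condition from $\D$ to all of $\Mtcs$, so $\s_t$ is the modular automorphism group and $\Dc\otimes\Dc=\Delta_{\Mtcs}$. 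For a $\ZZ_N$-action the same computation applies with $\k$ replaced by $k/N$ and the sectors indexed by $\ZZ_N$.
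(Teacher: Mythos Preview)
Your argument is correct and follows essentially the same route as the paper's second proof, namely verifying the KMS condition for $\Ad(\Dc^{it}\otimes\Dc^{it})$ on the Fourier components $x_l\Vc(m\k)\otimes y_m$, using the $\ac$-invariance of the vacuum state to force $l'=-l$, $m'=-m$ and hence the equality of the twisting phases, and then invoking density. Your choice to work with modular-entire elements is a slight presentational variant of the paper's use of the abstract KMS function, but the logic is identical. The paper additionally offers a shorter first proof you did not give: it checks directly that $\Vtcs(\Jc\otimes\Jc)(\Dc\otimes\Dc)^{1/2}$ acts as the Tomita involution on $(x_l\Vc(m\k)\otimes y_m)\Omtc$, and then reads off the modular operator from the uniqueness of the polar decomposition.
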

\begin{proof}
We will give two proofs. The first one is a direct calculation and the second one uses
the so-called KMS condition.

A direct calculation goes as follows. We will see that actually the modular conjugation
is $\Vtcs (\Jc\otimes \Jc)$. Since we have already the candidates for the
modular objects, we only have to check their actions. First we take $x_l \in \Mc^{\ac}(l)$
and $y_m \in \Mc^{\ac}(m)$. The operator $x_l\Vc(m\k)\otimes y_m$ belongs to
$\Mtcs$ and it holds that
\begin{eqnarray*}
\Vtcs(\Jc\otimes \Jc)(\Dc\otimes \Dc) (x_l\Vc(m\k)\otimes y_m) \Omtc
&=& \Vtcs (S_c\otimes S_c) (x_l\Omc\otimes y_m\Omc) \\
&=& \Vtcs (x_l^*\Omc\otimes y_m^*\Omc) \\
&=& e^{i2\pi lm\k}(x_l^*\Omc\otimes y_m^*\Omc) \\
&=& (\a_{\mathrm{c},-m\k}(x_l^*)\Omc)\otimes (y_m^*\Omc) \\
&=& (\Vc(-m\k) x_l^*\Omc) \otimes (y_m^*\Omc) \\
&=& (x_l\Vc(m\k)\otimes y_m)^*(\Omc\otimes\Omc),
\end{eqnarray*}
namely, the operator $\Vtcs(\Jc\otimes \Jc)(\Dc\otimes \Dc)$ acts correctly as
the modular involution. Since $\Mtcs$ is the $*$-strong closure of
the linear span of such elements,
the modular involution actually coincides
with $\Vtcs(\Jc\otimes \Jc)(\Dc\otimes \Dc)$. Then the conclusion follows
immediately from the uniqueness of the polar decomposition.

In the second proof, we use the uniqueness of the modular automorphism group
with respect to the KMS condition. In general, for a von Neumann algebra $\M$
and a state $\p$, if there is a one-parameter automorphisms $\sigma_t$ which
satisfies the KMS condition, namely if for any pair $x,y \in \M$ there is an analytic function
$f(t)$ on the strip $0 < \Im t < 1$ and continuous on the boundary such that
$f(t) = \p(\sigma_t(x)y)$ and $f(t+i) = \p(y\sigma_t(x))$ for $t\in\RR$,
then $\sigma_t$ is the modular automorphism \cite[Theorem VIII.1.2]{TakesakiII}.
In our case, the state is the vacuum
$\<\Omtc, \cdot\,\Omtc\>$. This is the KMS state on the von Neumann algebra
$\Mc\otimes\Mc$ with respect to $\sigma^{\Omtc}_t = \Ad (\Dc^{it}\otimes \Dc^{it})$.
Hence for any pair $\widetilde{x},\widetilde{y}\in \Mc\otimes\Mc$
of elements there is an analytic function as above.
Actually the decomposition with respect to $\ac\otimes\ac$ commutes with $\sigma^{\Omtc}$
(see the remark about discrete Fourier expansion) and it holds that
\begin{eqnarray*}
\<\Omtc, \sigma^{\Omtc}_t(\widetilde{x})\widetilde{y}\Omtc\>
&=& \sum_{l,m}\<\Omtc, \sigma^{\Omtc}_t(\widetilde{x}_{l,m})\widetilde{y}_{-l,-m}\Omtc\>, \\
\<\Omtc, \widetilde{y}\sigma^{\Omtc}_t(\widetilde{x})\Omtc\>
&=& \sum_{l,m}\<\Omtc, \widetilde{y}_{-l,-m}\sigma^{\Omtc}_t(\widetilde{x}_{l,m})\Omtc\>
\end{eqnarray*}
by the orthogonality of the vacuum acted on by the component $\widetilde{x}_{l,m}$ etc.
Next, by considering the pair $\widetilde{x}_{l,m}, \widetilde{y}_{-l,-m}$ and the KMS condition for $\Mc\otimes\Mc$
there is an analytic function $\widetilde{f}_{l,m}(t) = \<\Omtc, \sigma^{\Omtc}_t(\widetilde{x}_{l,m})\widetilde{y}_{-l,-m}\Omtc\>$,
$\widetilde{f}_{l,m}(t+i) = \<\Omtc, \widetilde{y}_{-l,-m}\sigma^{\Omtc}_t(\widetilde{x}_{l,m})\Omtc\>$ (of course the dependence of
$\widetilde{f}_{l,m}$ on $\widetilde{x}$ and $\widetilde{y}$ is implicit).
Let us turn to $\Mtcs$. By the same argument as above, an inner product decomposes into
suitable combination and the decomposition is compatible with the action of
$\sigma^{\Omtc}_t = \Ad (\Dc^{it}\otimes\Dc^{it})$.
For a pair of elements $\widetilde{x}_{l,m}(e^{i2\pi m\k\Qc}\otimes \1)$
and $\widetilde{y}_{-l,-m}(e^{-i2\pi m\k\Qc}\otimes \1)$,
where $\widetilde{x}_{l,m}, \widetilde{y}_{-l-m} \in \Mc\otimes\Mc$, we have
\begin{eqnarray*}
\<\Omtc, \sigma^{\Omtc}_t(\widetilde{x}_{l,m}(e^{i2\pi m\k\Qc}\otimes \1))\widetilde{y}_{-l,-m}(e^{-i2\pi m\k\Qc}\otimes \1)\Omtc\>
&=& e^{-i2\pi lm\k}\<\Omtc, \sigma^{\Omtc}_t(\widetilde{x}_{l,m})\widetilde{y}_{-l,-m}\Omtc\> \\
\<\Omtc, \widetilde{y}_{-l,-m}(e^{-i2\pi m\k\Qc}\otimes \1)\sigma^{\Omtc}_t(\widetilde{x}_{l,m}(e^{i2\pi m\k\Qc}\otimes \1))\Omtc\>
&=& e^{-i2\pi lm\k}\<\Omtc, \widetilde{y}_{-l,-m}\sigma^{\Omtc}_t(\widetilde{x}_{l,m})\Omtc\>
\end{eqnarray*}
One observes that the right hand sides are equal to $\widetilde{f}_{l,m}$ up to the constant $e^{-i2\pi lm\k}$.
In other words, the KMS condition is satisfied for the pair with respect to $\sigma^{\Omtc}$.
Therefore for an arbitrary linear combination of such components the KMS condition holds as well,
thanks to the decomposition of the inner product into $(l,m)$-components.
Such linear combinations is $*$-strongly dense in $\Mtcs$, hence the above-cited
uniqueness theorem applies to see that $\sigma^{\Omtc}_t = \Ad (\Dc^{it}\otimes\Dc^{it})$
is the modular automorphism of $\Mtcs$ with respect to $\<\Omtc, \cdot\,\Omtc\>$
(indeed, the KMS condition on a $*$-strongly dense subalgebra is enough by
\cite[Proposition 5.7]{BR2}. One should note that in \cite{BR2} the KMS condition is
defined on a dense set of analytic elements).
Since $\Dc\otimes\Dc$ preserves the vacuum vector $\Omc\otimes\Omc$,
it must coincide with the modular operator.
\end{proof}

It is interesting to compare our proof with \cite{BL04, Lechner08} where the modular
objects were calculated through unbounded operators affiliated to the von Neumann algebras.
One recalls also that in our previous work the modular objects were indirectly determined
by scattering theory \cite[Section 3]{Tanimoto12-2}.

\begin{proposition}\label{pr:commutant}
It holds that
$(\Mtcs)' = \left(\Ad \Vtc(\Mc'\otimes\1)\vee\1\otimes\Mc'\right) =: \Mtcs^1$.
\end{proposition}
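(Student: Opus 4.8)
The plan is to read this off from Proposition~\ref{pr:modular} together with the commutation theorem of Tomita--Takesaki. Since $(\Mtcs,\Ttc,\Omtc)$ is a Borchers triple (Theorem~\ref{th:borchers-inner}), $\Omtc$ is cyclic and separating for $\Mtcs$, so $(\Mtcs)'=\Ad J(\Mtcs)$, where $J$ is the modular conjugation of $\Mtcs$ with respect to $\Omtc$. The first proof of Proposition~\ref{pr:modular} already identifies this $J$ as $J=\Vtcs\,(\Jc\otimes\Jc)$. The only structural fact about $J$ used below is that conjugation by $\Jc\otimes\Jc$ inverts the twisting unitary,
\[
 (\Jc\otimes\Jc)\,\Vtcs\,(\Jc\otimes\Jc)=\Vtcs^{*},
\]
which follows because the antiunitary $\Jc$ commutes with $\Vc$ (recalled under ``Modular operator and inner symmetry''): this forces $(\Jc\otimes\Jc)(\Qc\otimes\Qc)(\Jc\otimes\Jc)=\Qc\otimes\Qc$, whence the displayed identity; it is also exactly what makes $\Vtcs(\Jc\otimes\Jc)$ an involution.

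Granting this, I would evaluate $\Ad J$ on the two generating subalgebras of $\Mtcs=(\Mc\otimes\1)\vee\Ad\Vtcs(\1\otimes\Mc)$, using the factorization $\Ad J=\Ad\Vtcs\circ\Ad(\Jc\otimes\Jc)$. For the first, $\Ad(\Jc\otimes\Jc)(\Mc\otimes\1)=(\Jc\Mc\Jc)\otimes\1=\Mc'\otimes\1$ by Tomita--Takesaki for $(\Mc,\Omc)$, so $\Ad J(\Mc\otimes\1)=\Ad\Vtcs(\Mc'\otimes\1)$. For the second, the identity above gives $\Ad(\Jc\otimes\Jc)\circ\Ad\Vtcs=\Ad\Vtcs^{*}\circ\Ad(\Jc\otimes\Jc)$, hence
\[
 \Ad J\bigl(\Ad\Vtcs(\1\otimes\Mc)\bigr)=\Ad\Vtcs\circ\Ad\Vtcs^{*}\circ\Ad(\Jc\otimes\Jc)(\1\otimes\Mc)=\1\otimes(\Jc\Mc\Jc)=\1\otimes\Mc'.
\]
Since $\Ad J$ is a normal isomorphism of $B(\Htc)$, it carries a join of von Neumann algebras to the join of their images, so
\[
 (\Mtcs)'=\Ad J(\Mtcs)=\Ad\Vtcs(\Mc'\otimes\1)\vee\bigl(\1\otimes\Mc'\bigr)=\Mtcs^{1},
\]
which is the assertion. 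In particular this re-derives the inclusion $\Mtcs^{1}\subseteq(\Mtcs)'$ already contained in Lemma~\ref{lm:commutativity-s} and in the proof of Theorem~\ref{th:borchers-inner}.

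I do not expect a genuine obstacle here: it is a short computation, and the only point requiring care is the interaction of the antiunitary $\Jc\otimes\Jc$ with the twisting unitary $\Vtcs$, which is settled once and for all by $(\Jc\otimes\Jc)\Vtcs(\Jc\otimes\Jc)=\Vtcs^{*}$. The $\ZZ_N$-action case is word for word the same, with $\Vtcs,\Mtcs,\Mtcs^{1}$ replaced by $\Vtc,\Mtc,\Mtc^{1}$ and $\k$ by $k/N$.
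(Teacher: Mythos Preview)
Your argument is correct and takes a genuinely different route from the paper. The paper proves the inclusion $\Mtcs^1\subset(\Mtcs)'$ as you note, then uses only the \emph{modular group} from Proposition~\ref{pr:modular}: since $\Dc^{it}\otimes\Dc^{it}$ commutes with $\Vtcs$ and implements automorphisms of $\Mc'$, the subalgebra $\Mtcs^1$ is globally invariant under the modular group of $(\Mtcs)'$; as $\Omtc$ is cyclic for $\Mtcs^1$, Takesaki's theorem forces equality. You instead extract the \emph{modular conjugation} $J=\Vtcs(\Jc\otimes\Jc)$ from the first proof of Proposition~\ref{pr:modular} and compute $J\Mtcs J$ directly on generators, using the pleasant identity $(\Jc\otimes\Jc)\Vtcs(\Jc\otimes\Jc)=\Vtcs^{*}$ (which indeed follows from $\Jc\Qc\Jc=-\Qc$, hence $(\Jc\otimes\Jc)(\Qc\otimes\Qc)(\Jc\otimes\Jc)=\Qc\otimes\Qc$). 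Your approach is shorter and avoids invoking Takesaki's theorem; the paper's approach has the mild advantage that it uses only the statement of Proposition~\ref{pr:modular} (the modular operator), so it would go through even if one had only the KMS proof and not the explicit formula for $J$. One small remark: in this subsection the Borchers-triple structure is not assumed, so rather than citing Theorem~\ref{th:borchers-inner} for cyclicity and separability of $\Omtc$, it is cleaner to note that these follow directly from the corresponding properties of $\Omc$ together with $\Vtcs\Omtc=\Omtc$ and Lemma~\ref{lm:commutativity-s}, exactly as in that theorem's proof.
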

\begin{proof}
It can be observed that the right hand side commute with
$\Mtcs$ by the same argument as in Theorem \ref{th:borchers-inner}, so we have
$\Mtcs^1 \subset (\Mtcs)'$.
Since we already know that the modular operator for $(\Mtcs)'$
with respect to $\Omc\otimes\Omc$ is $\Dc^{-1}\otimes\Dc^{-1}$ from Proposition \ref{pr:modular},
it is immediate to see that $\Mtcs^1$ is globally invariant under the modular group of
$(\Mtcs)'$ with respect to $\Omc\otimes\Omc$. Then the two algebras coincide
since $\Omc\otimes\Omc$ is cyclic for $\Mtcs^1$ by a standard application of Takesaki's theorem
\cite[Theorem IX.4.2]{TakesakiII} (see \cite[Theorem A.1]{Tanimoto12-1}).
\end{proof}
Variations of Propositions \ref{pr:modular} and \ref{pr:commutant} hold for an action of $\ZZ_N$ with
trivial changes.

\subsection{Proof through wedge-split property}\label{split}
Here we take as the starting point a Borchers triple $(\Mc,\Tc,\Omc)$ which is wedge-split.
As before, we assume that there is an action of $S^1$ implemented by $\Vc(\k)$.
With this action and for $\k \in \RR$ we can construct a Borchers triple $(\Mtcs, \Ttc, \Omtc)$
as in Section \ref{inner}. We are going to prove that this is again wedge-split, hence strictly local.

Recall that \cite{DL84} for a split inclusion $(\Nc \subset \Mc, \Omc)$ equipped with a cyclic separating
vector $\Omc$ for $\Mc, \Nc$ and $\Mc\cap \Nc'$ (where $\Nc$ will be $\Ad \Tc(a)(\Mc)$ for some $a \in W_\R$),
there is a {\em canonical} type I factor $\Rcc$
such that $\Nc \subset \Rcc \subset \Mc$. Moreover, $\Rcc$ is given by the formula
$\Rcc = \Nc \vee J\Nc J = \Mc \cap J\Mc J$, where $J$ is the modular conjugation for $\Mc\cap \Nc '$ with
respect to $\Omc$. 
If $\Vc$ is a unitary operator which preserves $\Omc$
and $\Ad \Vc$ preserves both $\Nc$ and $\Mc$, then $\Rcc$ is preserved under $\Ad \Vc$ as well.

\begin{lemma}\label{lm:factoriality}
The von Neumann algebra $\Rtccs := \Rcc\otimes \1\vee\Ad \Vtcs(\1\otimes \Rcc)$
is a factor.
\end{lemma}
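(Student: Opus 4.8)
The plan is to make the type~I structure of $\Rcc$ fully explicit and to recognise $\Rtccs$ as a von Neumann tensor product of two type~I factors.

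First I would fix a factorisation $\Hc=\K\otimes\tilde\K$ with $\Rcc=B(\K)\otimes\1_{\tilde\K}$ and $\Rcc'=\1_\K\otimes B(\tilde\K)$. Since $\Ad\Vc(\k)$ preserves $\Rcc$, hence also $\Rcc'$, its restrictions to these two type~I factors are inner, so $\Vc(\k)$ agrees up to a scalar with a product $u(\k)\otimes\tilde u(\k)$ of unitaries on $\K$ and on $\tilde\K$. Picking the phases in a Borel fashion and using that the resulting $2$-cocycle is a coboundary (together with automatic continuity and Stone's theorem in the $S^1$-case), one may assume $\Vc(\k)=e^{i2\pi\k A}\otimes e^{i2\pi\k B}$, that is
\[
\Qc=A\otimes\1_{\tilde\K}+\1_\K\otimes B,\qquad A=A^*\text{ on }\K,\quad B=B^*\text{ on }\tilde\K,
\]
after absorbing a scalar generator into $A$. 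Because $\sp\Qc\subset\ZZ$ the eigenvalues of $A$ are pairwise congruent mod $\ZZ$, so $A=\sum_a a\,P_A(a)$ with spectral projections $P_A(a)\in B(\K)$, and similarly $B=\sum_b b\,P_B(b)$.

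Next I would compute $\Rtccs$ on $\Htc=\Hc\otimes\Hc=\K_1\otimes\tilde\K_1\otimes\K_2\otimes\tilde\K_2$. Writing $E_q=\sum_{a+b=q}P_A(a)\otimes P_B(b)$ for the spectral projection of the first-copy $\Qc$ at $q\in\ZZ$, one has $\Vtcs=\sum_q E_q\otimes\Vc(q\k)$ (second factor on the second copy), whence for $y=y_0\otimes\1_{\tilde\K_2}\in\Rcc$
\[
\Ad\Vtcs(\1\otimes y)=\sum_q E_q\otimes\Ad\Vc(q\k)(y)=\sum_q E_q\otimes\bigl(\Ad e^{i2\pi q\k A}(y_0)\bigr)\otimes\1_{\tilde\K_2},
\]
since the $B$-part of $\Vc(q\k)$ acts only on $\tilde\K_2$. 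Taking $y_0=\Ad e^{-i2\pi a_0\k A}(z)$ and multiplying by $P_A(a_0)\otimes\1\in\Rcc\otimes\1$ produces $P_A(a_0)\otimes R(z)\otimes\1_{\tilde\K_2}\in\Rtccs$ for all $a_0$ and all $z\in B(\K_2)$, where
\[
R\colon B(\K_2)\longrightarrow B(\tilde\K_1\otimes\K_2),\qquad R(z)=\sum_b P_B(b)\otimes\Ad e^{i2\pi b\k A}(z);
\]
summing over $a_0$ gives $\1_{\K_1}\otimes R(B(\K_2))\otimes\1_{\tilde\K_2}\subset\Rtccs$. Combining this with $\Rcc\otimes\1=B(\K_1)\otimes\1$ and the reverse inclusion (the generators of $\Rtccs$ visibly lie in $B(\K_1)\,\overline{\otimes}\,R(B(\K_2))\otimes\1_{\tilde\K_2}$, using $R(\1)=\1$), one obtains
\[
\Rtccs=B(\K_1)\,\overline{\otimes}\,R\bigl(B(\K_2)\bigr)\otimes\1_{\tilde\K_2}.
\]

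It remains to note that $R$ is a normal, unital, injective $*$-homomorphism: multiplicativity and unitality from $P_B(b)P_B(b')=\delta_{bb'}P_B(b)$ and $\sum_b P_B(b)=\1$, and injectivity by compressing $R(z)$ with $P_B(b)\otimes\1$ and using unitarity of $e^{i2\pi b\k A}$. Hence $R(B(\K_2))$ is a von Neumann subalgebra $*$-isomorphic to $B(\K_2)$, i.e.\ a type~I factor, so $\Rtccs$ is a tensor product of two type~I factors and is therefore a (type~I) factor; the $\ZZ_N$-case is identical with $\k$ replaced by $k/N$. The one genuinely delicate point is the normalisation $\Qc=A\otimes\1+\1\otimes B$ in the second paragraph; everything after that is bookkeeping, and once the embedding $R$ is isolated the conclusion is immediate.
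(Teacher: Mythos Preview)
Your argument is correct and in fact establishes more than the lemma asks: the explicit identification $\Rtccs = B(\K_1)\,\overline{\otimes}\,R(B(\K_2))\otimes\1_{\tilde\K_2}$ yields the type~I property (the content of Theorem~\ref{th:type1}) in the same stroke. The paper takes a different route. For factoriality it shows $\Rtccs\vee(\Rtccs)'=B(\Htc)$: it first invokes the commutant formula of Proposition~\ref{pr:commutant} (which in turn rests on the modular computation of Proposition~\ref{pr:modular}), and then argues that $\Vc(\k)\otimes\1$ and $\1\otimes\Vc(\k)$, hence $\Vtcs$ itself, lie in $\Rtccs\vee(\Rtccs)'$, using only that inner implementers $u\in\Rcc$ and $u'\in\Rcc'$ of $\Ad\Vc(\k)$ exist and that $u'$ is fixed under the twist. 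Type~I is then treated separately by exhibiting a minimal projection $p\otimes p$ with $p$ minimal in the fixed-point algebra $\Rcc^{\ac}$. Your approach is more constructive and bypasses the modular/commutant machinery of Section~\ref{common} entirely, at the price of the spectral bookkeeping and the splitting $\Qc=A\otimes\1+\1\otimes B$, which as you rightly flag is the one nontrivial input. Conversely, the paper's factoriality argument does not use periodicity and remains valid for $\RR$-actions (as remarked just after the lemma), whereas your spectral sums over $\sp A$ and $\sp B$ rely on discreteness from the outset. One small point worth making explicit: the normality of $R$ (needed so that $R(B(\K_2))$ is weakly closed) follows at once from $R=\Ad W$ with $W=\sum_b P_B(b)\otimes e^{i2\pi b\k A}$ unitary.
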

\begin{proof}
Since we know that $(\Rtccs)' = \Ad \Vtcs(\Rcc'\otimes\1)\vee\1\otimes\Rcc'$ by Proposition \ref{pr:commutant}
(in Section \ref{common} we did not assume that the von Neumann algebra comes from a Borchers triple),
we only have to show that $\Rtccs\vee(\Rtcc)' = \Rcc\otimes\Rcc'\vee\Ad \Vtcs(\Rcc'\otimes\Rcc)$ is equal to
$B(\Hc\otimes\Hc)$.

We show that $\Vc(\k)\otimes \1$ and $\1\otimes \Vc(\k)$ are contained in $\Rtccs\vee(\Rtccs)'$
for each $\k\in\RR$.
Indeed, $\Vc(\k)\otimes\1$ implements an automorphism on $\Rcc\otimes \1$
and since the latter is a type I factor, there is a unitary $u \in \Rcc$ which implements
the same automorphism. At the same time $\Vc(\k)\otimes \1$ implements an automorphism
of $\Rcc'\otimes\1$, and there is an implementing unitary $u' \in \Rcc'$.
Then $uu'$ implements $\Ad \Vc(\k)$ on $\B(\Hc) = \Rcc\vee\Rcc'$, hence $uu'\Vc(\k)^*$ must be
a scalar. We may assume $uu' = \Vc(\k)$. Furthermore, obviously $\Ad \Vc(\k)(u') = \Ad u'(u') = u'$,
namely $u'$ is fixed under the automorphism $\Ad \Vc(\k)$, hence $\Ad \Vtcs(u'\otimes \1) = u'\otimes \1$.
This implies that $u'\otimes \1 \in (\Rtccs)'$ hence $uu'\otimes \1 = \Vc(\k)\otimes \1$ is in
$\Rtccs \vee (\Rtccs)'$. An analogous proof works for $\1\otimes \Vc(\k)$.

The rest is easy since $\Vtcs$ is obtained from the functional calculus of $\Vc(\k)\otimes\1$ and
$\1\otimes \Vc(\k)$.
\end{proof}
This Lemma actually works even for an action of $\RR$. In contrast, we need
the periodicity of the action in the following.

\begin{theorem}\label{th:type1}
The von Neumann algebra $\Rtccs := \Rcc\otimes \1\vee\Ad \Vtcs(\1\otimes \Rcc)$
is a type I factor.
\end{theorem}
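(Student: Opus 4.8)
The plan is to leverage Lemma \ref{lm:factoriality} (which already gives factoriality) and show that $\Rtccs$ is spatially isomorphic to a manifestly type I factor. Since $\Rcc$ is a type I factor, I would first fix a unitary identification $\Hc \cong \K \otimes \mathcal{L}$ under which $\Rcc = B(\K)\otimes\1$ and $\Rcc' = \1\otimes B(\mathcal{L})$. Because $\Ad\Vc(\k)$ preserves both $\Rcc$ and $\Rcc'$ and $\Vc$ is a (strongly continuous) representation of $S^1$, a standard analysis of the $\Vc(\k)$-action on these two type I factors lets one write $\Qc = q_1\otimes\1 + \1\otimes q_2$ with $q_1=q_1^*$ on $\K$ and $q_2=q_2^*$ on $\mathcal{L}$; moreover, since $\sp\Qc\subset\ZZ$, the spectral values of $q_1$ (resp.\ $q_2$) all lie in a single coset of $\ZZ$, so that for a unitary $u\in\Rcc$ implementing $\Ad\Vc(\k)$ on $\Rcc$ the integer powers $u^n$ appearing below are unambiguous.

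The computation then goes as follows. Writing $P_m$ for the spectral projections of $\Qc$ acting on the second tensor factor, one has $\Vtcs=\sum_m \Vc(m\k)\otimes P_m$ (functional calculus of the commuting pair $\Qc\otimes\1$, $\1\otimes\Qc$), and, splitting $\Vc(\k)=uu'$ with $u\in\Rcc$ and $u'\in\Rcc'$, one obtains $\Vtcs=\widehat V\widehat V'$ with $\widehat V:=\sum_m u^m\otimes P_m$ and $\widehat V':=\sum_m u'^{m}\otimes P_m$; these strongly convergent sums exist precisely because $\sp\Qc\subset\ZZ$. Now $\widehat V'$ commutes with $\Rcc\otimes\1$ and with $\widehat V$, so conjugating $\Rtccs$ by $(\widehat V')^{-1}$ fixes $\Rcc\otimes\1$ and carries $\Ad\Vtcs(\1\otimes\Rcc)$ to $\Ad\widehat V(\1\otimes\Rcc)$. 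Using $\Qc=q_1\otimes\1+\1\otimes q_2$ in the second leg one computes, for $y=\tilde y\otimes\1\in\Rcc$, that $\Ad\widehat V(\1\otimes y)=\sum_{a,a'}u^{a-a'}\otimes e_a\tilde y e_{a'}$, where $e_a$ are the spectral projections of $q_1$ — the $\Rcc'$-part $\1\otimes q_2$ of $\Qc$ commutes with $y$ and drops out. Hence $\Ad(\widehat V')^{-1}(\Rtccs)$ is supported on the two copies of $\K$, and a matrix-unit argument finishes the proof: taking $\tilde y=e_{a_0}$ puts $\1\otimes e_{a_0}$ into the algebra, cutting a general $\Ad\widehat V(\1\otimes y)$ down by these projections and multiplying by $u^{-(a-a')}\otimes\1\in\Rcc\otimes\1$ produces all of $\1\otimes B(\K)$, and together with $\Rcc\otimes\1=B(\K)\otimes\1$ this yields $B(\K\otimes\K)$ on the two $\K$-factors. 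Thus $\Rtccs$ is spatially isomorphic to a type I factor.

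I expect the main obstacle to be exactly the reduction in the first paragraph: $\Qc$ is affiliated to neither $\Rcc$ nor $\Rcc'$, so its spectral projections $P_m$ do not interact cleanly with $\Rcc$, and one really has to split the implementer of $\Ad\Vc(\k)$ along $\Rcc$ and $\Rcc'$ in order to make the conjugation identities collapse. This is where periodicity is indispensable — in contrast to Lemma \ref{lm:factoriality}, which holds even for an action of $\RR$ — since the device $\widehat V=\sum_m u^m\otimes P_m$ has no substitute once $\sp\Qc$ is allowed to be a continuum. Finally, as elsewhere in this section, the case of a $\ZZ_N$-action is handled by the identical argument with $S^1$ replaced by $\ZZ_N$ throughout, again using $\sp\Qc\subset\ZZ$.
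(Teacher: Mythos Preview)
Your argument is correct, and it takes a genuinely different route from the paper's. The paper exhibits a minimal projection directly: it chooses (via Bargmann) a one-parameter group $u(\k)\in\Rcc$ implementing $\Ad\Vc(\k)$ on $\Rcc$, observes that the fixed-point algebra $\Rcc^{\ac}=\Rcc\cap\{u(\k)\}'$ is a direct sum of type I factors and hence contains a minimal projection $p$, and then verifies that $p\otimes p$ is minimal in $\Rtccs$ by showing that any subprojection must commute with $\1\otimes u(\k)$ and therefore already lies in $\Rcc\otimes\Rcc^{\ac}\subset\Rtccs$.

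Your approach instead identifies $\Rtccs$ spatially: the same Bargmann step (hidden in your ``standard analysis'') gives the decomposition $\Qc=q_1\otimes\1+\1\otimes q_2$, and then the conjugation by $\widehat V'$ untwists $\Rtccs$ into $B(\K)\otimes\1_{\mathcal L}\otimes B(\K)\otimes\1_{\mathcal L}$ exactly. This is more informative --- you get the full spatial picture, and in particular factoriality comes for free, making your invocation of Lemma~\ref{lm:factoriality} redundant --- at the price of a longer computation. The two arguments share the same essential inputs (the splitting of the implementer along $\Rcc$ and $\Rcc'$, and periodicity forcing $\sp q_1$ into a single $\ZZ$-coset so that the exponents $a-a'$ in $u^{a-a'}$ are integers); your minimal projections $e\otimes\1\otimes e\otimes\1$ with $e$ rank one in some $e_a\K$ are precisely the paper's $p\otimes p$.
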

\begin{proof}
We have seen that $\Rtccs$ is a factor. What remains is to show that $\Rtccs$ contains
a minimal projection. As in Lemma \ref{lm:factoriality} we take implementing unitary
$u(\k) \in \Rcc$ for $\Ad \Vc(\k)$, this time indicating the dependence on $\k$.
By a classical result by Bargmann, we may assume that $u(\k)$ is a one-parameter
group of unitaries in $\Rcc$ \cite[Theorem 1.1, Lemma 4.3]{Bargmann54}, where our group is $S^1$ hence
a one-parameter group.

Thus the spectrum of the representation $u$ is discrete and contained in $\ZZ$
(by multiplying a one-parameter phase),
hence the fixed point $\Rcc^{\ac} = \Rcc\cap\{u(\k): \k\in\RR\}'$ is
the relative commutant in $\Rcc$ of the spectral projections of the representation $u$.
Hence it is the direct sum of type I factors and admits a minimal projection.

Let us take such a minimal projection $p$ in $\Rcc^{\ac}$.
In general, if $y \in \Rcc^{\ac}$, then $\Ad \Vtcs(\1\otimes y) = \1\otimes y$ and hence
the inclusion $\Rcc\otimes\Rcc^{\ac} \subset \Rtccs$ holds.
Obviously $p$ must be a subprojection of
a spectral projector of $u$ corresponding to a discrete eigenvalue.
Furthermore, it is immediate that $p\otimes p$ is a minimal projection in $\Rcc\otimes \Rcc^{\ac}$.
We claim that it is a minimal projection in $\Rtccs$.
In fact, suppose that $\widetilde{q}\le p\otimes p$ in $\Rtccs$. Then $\widetilde{q}$ and $\1\otimes u(\k)$ commute
since the spectral projections of the representation $\1\otimes u$ either contains or disjoint from $p\otimes p$,
hence so does it hold with $\widetilde{q}$. This implies that $\widetilde{q}$ belongs to $\Rcc\otimes \Rcc^{\ac}$
(consider the discrete Fourier expansion $\widetilde{q} = \sum \widetilde q_{lm}(\Vc(m\k)\otimes \1)$, then
it is possible that $\Ad (1\otimes u(\k))(\widetilde q) = \widetilde q$ only if $\widetilde q_{lm}$ vanish except
$m = 0$),
then $\widetilde{q} = p\otimes p$ by the minimality of $p\otimes p$ in this restricted algebra.
This is the minimality of $p\otimes p$ in $\Rtccs$.
\end{proof}

We remark that the intermediate type I factor constructed here is not the canonical
one of \cite{DL84}. An explicit formula for it involves the modular conjugation of
the relative commutant, which is only abstractly determined.

The proof of Theorem \ref{th:type1} can be easily adapted to the case of an action
of $\ZZ_N$.

In terms of wedge-split Borchers triple, we apply Theorem \ref{th:type1} with $\Nc = \Ad \Tc(a)(\Mc)$
to see the split inclusion
\[
 \Nc\otimes\1\vee \Ad \Vtcs(\1\otimes\Nc) \subset \Rcc\otimes\1\vee \Ad \Vtcs(\1\otimes\Rcc)
\subset \Mc\otimes\1\vee \Ad \Vtcs(\1\otimes\Mc),
\]
where the middle element is $\Rtccs$ and the last element is $\Mtcs$. Since $\Ttc$ and $\Vtcs$
commute, $\Ad \Ttc(a)(\Mtcs)\subset \Mtcs$ is split with an intermediate type I factor $\Rtccs$
(which implicitly depends on $a$).
Then we have the following with the help of Theorem \ref{th:lechner}.
\begin{theorem}\label{th:strict-split}
If a given Borchers triple $(\Mc, \Tc, \Omc)$ with an action of
$S^1$ by inner symmetry is wedge-split, then
the Borchers triple $(\Mtcs, \Ttc, \Omtc)$ is wedge-split, hence strictly local.
\end{theorem}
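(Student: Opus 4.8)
The plan is to assemble ingredients that are already in place, the heart of the matter being Theorem \ref{th:type1}. The first step is to identify, for $a\in W_\R$, the conjugated algebra $\Ad\Ttc(a)(\Mtcs)$. Since $\Qc$ commutes with the translations $\Tc$ (by the very construction of $\Qc$ in Section \ref{inner}), the twist $\Vtcs=e^{i2\pi\k\Qc\otimes\Qc}$ commutes with $\Ttc(a)=\Tc(a)\otimes\Tc(a)$; hence $\Ad\Ttc(a)$ commutes with $\Ad\Vtcs$ and acts componentwise on the generators, giving
\[
\Ad\Ttc(a)(\Mtcs)=\Nc\otimes\1\vee\Ad\Vtcs(\1\otimes\Nc),\qquad \Nc:=\Ad\Tc(a)(\Mc).
\]

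Next I would feed in the wedge-split hypothesis. For $a\in W_\R$ the inclusion $\Nc\subset\Mc$ is split, and, the triple being a Borchers triple, $\Omc$ is cyclic and separating for $\Mc$, $\Nc$ and $\Mc\cap\Nc$. The canonical intermediate type I factor $\Rcc$ of \cite{DL84} then satisfies $\Nc\subset\Rcc\subset\Mc$, and because $\Vc(\k)$ fixes $\Omc$ and $\Ad\Vc(\k)$ preserves both $\Nc$ and $\Mc$, it also preserves $\Rcc$ (as recalled before Lemma \ref{lm:factoriality}). Theorem \ref{th:type1} now says that $\Rtccs=\Rcc\otimes\1\vee\Ad\Vtcs(\1\otimes\Rcc)$ is a type I factor, and tensoring-and-twisting the chain $\Nc\subset\Rcc\subset\Mc$ yields
\[
\Ad\Ttc(a)(\Mtcs)\;\subset\;\Rtccs\;\subset\;\Mtcs .
\]
Thus $\Ad\Ttc(a)(\Mtcs)\subset\Mtcs$ is split for every $a\in W_\R$.

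To conclude, Theorem \ref{th:borchers-inner} already gives that $(\Mtcs,\Ttc,\Omtc)$ is a Borchers triple, so $\Omtc=\Omc\otimes\Omc$ is cyclic and separating for $\Mtcs$; combined with the split inclusions above this is exactly the statement that $(\Mtcs,\Ttc,\Omtc)$ is wedge-split. Strict locality then follows at once from Theorem \ref{th:lechner}.

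Regarding the main obstacle: all the genuine work lies in Theorem \ref{th:type1}, where the periodicity of the $S^1$-action and Bargmann's theorem are needed to produce a minimal projection in the twisted tensor algebra; once that is available, the present statement is a bookkeeping assembly. The only point here that requires a moment's care is the commutation $[\Ttc(a),\Vtcs]=0$, which is what makes $\Ad\Ttc(a)(\Mtcs)$ have the expected form and which rests on $\Qc$ commuting with $\Tc$.
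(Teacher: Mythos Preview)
Your proposal is correct and follows essentially the same route as the paper: set $\Nc=\Ad\Tc(a)(\Mc)$, use the commutation of $\Vtcs$ with $\Ttc$ to identify $\Ad\Ttc(a)(\Mtcs)$, sandwich via the canonical intermediate type~I factor $\Rcc$ and invoke Theorem~\ref{th:type1} to obtain the type~I factor $\Rtccs$, then conclude with Theorem~\ref{th:lechner}. Your write-up is in fact slightly more explicit than the paper's (you spell out why $\Rcc$ is $\Ad\Vc(\k)$-invariant and why $\Omc$ is cyclic/separating for the three algebras), but the argument is the same.
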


One can repeat parallel proofs for an action of $\ZZ_N$ to obtain
the same result.

\subsection{Proof through modular nuclearity}\label{nuclearity}
Let us give another proof of strict locality, based on modular nuclearity.
Here we have to restrict ourselves to the case of an action of $\ZZ_N$.
Of course one can take an arbitrary finite subgroup $\ZZ_N$ in $S^1$, hence
physically this should not be considered as an essential restriction.
Let $(\Mtc,\Ttc,\Omtc)$ again be constructed as in Section \ref{inner}
for a fixed $k\in\ZZ_N$.
Again, thanks to Theorem \ref{th:lechner}, it is enough to prove
the modular nuclearity of the new triple.

Let us start with a trivial observation.
\begin{lemma}\label{lm:nuclearity-product}
Let $(\Mc, \Tc, \Omc)$ be a Borchers triple with modular nuclearity.
Then the triple $(\Mc\otimes \Mc, \Ttc, \Omtc)$ has modular nuclearity.
\end{lemma}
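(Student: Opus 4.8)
The plan is to reduce the modular nuclearity of the tensor product triple to that of the factors via a general stability result for nuclear maps under tensor products. First I would recall from Proposition \ref{pr:modular} (or rather its trivial special case) that the modular operator of $\Mc\otimes\Mc$ with respect to $\Omtc=\Omc\otimes\Omc$ is $\Dc\otimes\Dc$, where $\Dc$ is the modular operator for $\Mc$ with respect to $\Omc$; this is standard for tensor products of von Neumann algebras with their cyclic separating vectors. Likewise, for $a\in W_\R$ the algebra $\Ad\Ttc(a)(\Mc\otimes\Mc)=\Ad\Tc(a)(\Mc)\otimes\Ad\Tc(a)(\Mc)$, since $\Ttc(a)=\Tc(a)\otimes\Tc(a)$. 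Writing $\Nc:=\Ad\Tc(a)(\Mc)$, the map whose nuclearity must be checked is
\[
\Xi\colon \Nc\otimes\Nc \ni z \longmapsto (\Dc^{1/4}\otimes\Dc^{1/4})\,z\,(\Omc\otimes\Omc)\in\Hc\otimes\Hc .
\]

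Next I would observe that $\Xi$ is precisely the tensor product of two copies of the map $\Theta\colon \Nc\ni x\mapsto \Dc^{1/4}x\Omc\in\Hc$: on simple tensors $x\otimes y$ we have $\Xi(x\otimes y)=\Theta(x)\otimes\Theta(y)$, and $\Theta$ is nuclear by the assumed modular nuclearity of $(\Mc,\Tc,\Omc)$ for the wedge $W_\R$ (applied at the translation $a$). The key input is the classical fact that the algebraic tensor product of two nuclear maps between Banach spaces extends to a nuclear map on the (completed) tensor product, with nuclear norm bounded by the product of the nuclear norms. One must be slightly careful about which operator-space/Banach-space tensor norm is used on $\Nc\otimes\Nc$: the relevant one here is the natural norm making $\Nc\otimes\Nc$ the von Neumann algebra tensor product (or its C*-algebraic predecessor), and the standard argument — expand each nuclear map as $\Theta(\slot)=\sum_i \ell_i(\slot)\,\xi_i$ with $\sum_i\|\ell_i\|\,\|\xi_i\|<\infty$, then take the product expansion $\Xi(\slot)=\sum_{i,j}(\ell_i\otimes\ell_j)(\slot)\,(\xi_i\otimes\xi_j)$ — goes through once one checks that $\ell_i\otimes\ell_j$ is bounded on $\Nc\otimes\Nc$ with the appropriate cross norm, which holds since bounded functionals tensor to bounded functionals on the minimal/spatial tensor product and $\Nc\otimes\Nc$ sits inside the latter.

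Then I would conclude: since $a\in W_\R$ was arbitrary, $(\Mc\otimes\Mc,\Ttc,\Omtc)$ satisfies modular nuclearity. I expect the only genuine (and still minor) obstacle to be the tensor-norm bookkeeping in the previous paragraph — verifying that the functionals $x\mapsto\langle\eta,\Dc^{1/4}x\Omc\rangle$ appearing in the nuclear expansion of $\Theta$ tensor to functionals bounded in the correct norm on $\Nc\otimes\Nc$, so that the product series genuinely witnesses nuclearity of $\Xi$ — but this is exactly the content of the standard "tensor product of nuclear maps is nuclear" lemma and requires no new idea. Everything else is a direct unwinding of definitions.
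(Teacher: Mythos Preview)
Your proposal is correct and takes essentially the same approach as the paper: identify the modular operator of the tensor product as $\Dc\otimes\Dc$, recognize the map in question as $\Theta\otimes\Theta$, and invoke the standard fact that the tensor product of nuclear maps is nuclear with nuclear norm bounded by the product. The paper's proof is the one-line version of this (``the nuclearity norm simply gets squared''), and your write-up is a faithful expansion of that line, including a correct flagging of the only mild subtlety (boundedness of $\ell_i\otimes\ell_j$ on the spatial tensor product).
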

\begin{proof}
The modular objects are the tensor products as well, hence the nuclearity norm simply gets squared.
\end{proof}

\begin{theorem}\label{th:strict-nuclear}
Let $(\Mc, \Tc, \Omc)$ be a Borchers triple with modular nuclearity.
Then the triple $(\Mtc, \Ttc, \Omtc)$ has modular nuclearity.
\end{theorem}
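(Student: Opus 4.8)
The plan is to derive modular nuclearity of $(\Mtc,\Ttc,\Omtc)$ from that of the plain tensor product triple $(\Mc\otimes\Mc,\Ttc,\Omtc)$ — which is available by Lemma \ref{lm:nuclearity-product} — by resolving an element of $\Mtc$ into its \emph{finitely many} Fourier components with respect to the $\ZZ_N\times\ZZ_N$-action. The finiteness of $\ZZ_N$ is precisely what keeps the relevant sum finite, and it is the reason the argument will not survive for a genuine $S^1$-action.

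Fix $a\in W_\R$ and put $\Mc_a:=\Ad\Tc(a)(\Mc)$. Since $\Tc$ commutes with $\Vc$ and $\Ad\Vc(\Mc)=\Mc$, the operator $\Vc$ implements an action of $\ZZ_N$ on $\Mc_a$ preserving $\Omc$, which is cyclic and separating for $\Mc_a$ because $\Tc(a)\Omc=\Omc$; moreover $\Ttc(a)=\Tc(a)\otimes\Tc(a)$ commutes with $\Vtc$, so
\[
\Ad\Ttc(a)(\Mtc)=\Mc_a\otimes\1\vee\Ad\Vtc(\1\otimes\Mc_a),\qquad \Mc_a\otimes\Mc_a=\Ad\Ttc(a)(\Mc\otimes\Mc).
\]
Hence the structural facts of Section \ref{common} apply verbatim with $\Mc$ replaced by $\Mc_a$. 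In particular every $\widetilde x\in\Ad\Ttc(a)(\Mtc)$ admits the finite decomposition $\widetilde x=\sum_{l,m\in\ZZ_N}\widetilde x_{l,m}\,(\Vc^{mk}\otimes\1)$ with $\widetilde x_{l,m}\in\Mc_a\otimes\Mc_a$, where $\widetilde x_{l,m}$ is, up to multiplication by the unitary $\Vc^{-mk}\otimes\1$, the Fourier component of $\widetilde x$ under $(\Ad\Vc)\otimes(\Ad\Vc)$ belonging to the character $(l,m)$; consequently each projection $\pi_{l,m}\colon\widetilde x\mapsto\widetilde x_{l,m}$ is a contraction.

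By the $\ZZ_N$-version of Proposition \ref{pr:modular}, the modular operator of $\Mtc$ with respect to $\Omtc$ is $\Dc\otimes\Dc$, the same operator as for $\Mc\otimes\Mc$. Thus the modular nuclearity map for the inclusion $\Ad\Ttc(a)(\Mtc)\subset\Mtc$ is $\Xi_a\colon\widetilde x\mapsto(\Dc^{\frac14}\otimes\Dc^{\frac14})\widetilde x\,\Omtc$, and since $(\Vc^{mk}\otimes\1)\Omtc=\Omtc$, evaluating it termwise on the decomposition above gives
\[
\Xi_a(\widetilde x)=\sum_{l,m\in\ZZ_N}(\Dc^{\frac14}\otimes\Dc^{\frac14})\,\widetilde x_{l,m}\,\Omtc=\sum_{l,m\in\ZZ_N}\Theta_a\bigl(\pi_{l,m}(\widetilde x)\bigr),
\]
where $\Theta_a\colon\Ad\Ttc(a)(\Mc\otimes\Mc)\ni z\mapsto(\Dc^{\frac14}\otimes\Dc^{\frac14})z\,\Omtc$ is the modular nuclearity map of the tensor product inclusion, nuclear by Lemma \ref{lm:nuclearity-product}. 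Hence $\Xi_a=\sum_{l,m\in\ZZ_N}\Theta_a\circ\pi_{l,m}$ is a finite sum of compositions of a nuclear map with contractions, so it is nuclear; as $a\in W_\R$ was arbitrary, $(\Mtc,\Ttc,\Omtc)$ has modular nuclearity.

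The essential obstacle is the one already anticipated in Section \ref{general}: the estimate uses that the index set $\ZZ_N\times\ZZ_N$ is finite, so that $\widetilde x$ is a finite sum of Fourier components and $\Xi_a$ is a finite sum of nuclear maps. For a full $S^1$-action the analogous resolution has infinitely many components — each still yielding a nuclear summand, but with nuclearity norms one cannot control summably — so this route really fails there, which is why one argues instead through the wedge-split property in Section \ref{split}. The only other point requiring some care is the verification that $\Ad\Ttc(a)(\Mtc)$ is once more of the twisted-tensor form (so that its Fourier components lie in $\Ad\Ttc(a)(\Mc\otimes\Mc)$, not merely in $\Mc\otimes\Mc$); this rests on $\Ttc$ commuting with $\Vtc$ and on $\Vc$ implementing an inner symmetry of the original Borchers triple.
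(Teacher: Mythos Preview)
Your proof is correct and follows essentially the same approach as the paper: both identify the modular operator of $\Mtc$ with $\Dc\otimes\Dc$ via Proposition~\ref{pr:modular}, invoke Lemma~\ref{lm:nuclearity-product} for the tensor product, and then use the $\ZZ_N\times\ZZ_N$ Fourier decomposition together with $(\Vc^{mk}\otimes\1)\Omtc=\Omtc$ to reduce the twisted nuclearity map to the untwisted one, with the finiteness of $N$ entering through the boundedness of the Fourier projections. The only cosmetic difference is that the paper packages the $N^2$ Fourier components into a single ``untwisting'' map $\tau_k^{-1}:\Mtc\to\Mc\otimes\Mc$ of norm $\le N^2$ and composes once with the nuclear map, whereas you write $\Xi_a=\sum_{l,m}\Theta_a\circ\pi_{l,m}$ as a sum of $N^2$ nuclear maps; both yield the same bound and rest on the same estimate $\|\pi_{l,m}(\widetilde x)\|\le\|\widetilde x\|$.
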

\begin{proof}
For a fixed $a \in W_\R$, we have to show that the inclusion $\Ad \Ttc(a) (\Mtc) \subset \Mtc$
has modular nuclearity with respect to $\Omtc$. By assumption and Lemma \ref{lm:nuclearity-product}
we know that $(\Ad \Ttc(a) (\Mc\otimes\Mc) \subset \Mc\otimes\Mc)$ has modular nuclearity.
This means that the map
\[
\Mc\otimes\Mc \ni \widetilde x \longmapsto \left(\Dc^{\frac{1}{4}}\otimes\Dc^{\frac{1}{4}}\right)\widetilde{x}(\Omc\otimes\Omc)
\]
is a nuclear map.

Let us consider an element $\widetilde{x}$ of $\Mc\otimes\Mc$. As we have seen in the end of Section \ref{common},
(before Proposition \ref{pr:commutant}) it can be decomposed as $\widetilde{x} = \sum_{l,m} \widetilde{x}_{l,m}$,
and $\widetilde{x}_{l,m}$ are the matrix components with respect to the grading
given by the $\ZZ_N$-action and each $\widetilde{x}_{l,m}$ is in $\Mc\otimes\Mc$.
For such $\widetilde{x}$, we define an element $\t_k(\widetilde{x}) \in \Mtc$ as follows:
\[
\t_k(\widetilde{x}) = \sum_{l,m} \widetilde{x}_{l,m}(\Vc^{km}\otimes \1).
\]
This is a finite sum, thus there is no problem of convergence.
This map $\t_k$ is onto, as any element in $\Mtc$ in decomposed
in the form above. It is important to observe that the action on the vector $\Omtc$ is
unchanged. In other words, it holds that $\widetilde{x}\Omtc = \t_k(\widetilde{x})\Omtc$.

Now, we know that the modular operator of $\Mtc$ with respect to $\Omtc$ is
$\Dc\otimes \Dc$, hence the map in question is
$\t_k(\widetilde{x})\longmapsto (\Dc^{\frac{1}{4}}\otimes\Dc^{\frac{1}{4}})\t_k(\widetilde{x})\Omtc
= (\Dc^{\frac{1}{4}}\otimes\Dc^{\frac{1}{4}}) \widetilde{x}\Omtc$.
By assumption and Lemma \ref{lm:nuclearity-product} we know that
the map $\widetilde{x} \longmapsto (\Dc^{\frac{1}{4}}\otimes\Dc^{\frac{1}{4}})\t_k(\widetilde{x})\Omtc$
is nuclear.
\begin{center}
\begin{tikzpicture}[descr/.style={fill=white,inner sep=2.5pt}]
\matrix (m) [matrix of math nodes, row sep=3em,
column sep=3em]
{ \widetilde{x} & & \widetilde{x}\Omtc = \t_k(\widetilde{x}) \Omtc & (\Dc^{\frac{1}{4}}\otimes\Dc^{\frac{1}{4}})\widetilde{x}\Omtc\\
\t_k(\widetilde{x}) & & &\\ };
\path[|->,font=\normalsize]
(m-1-1) edge  (m-1-3)
(m-1-3) edge  (m-1-4)
(m-1-1) edge  (m-2-1)
(m-2-1) edge  (m-1-3.south);
\end{tikzpicture}
\end{center}

In the diagram above, the left triangle commutes and the straight line above is nuclear.
Since a composition of a bounded linear map and a nuclear map is again nuclear,
we only have to show that $\t_k^{-1}$ is bounded.
For this purpose, let us recall how explicitly $\{\widetilde{x}_{l,m}\}$ are obtained
(see \cite[Proposition 4.9]{Tanimoto12-2} for a corresponding formula for $S^1$-action):
\[
\widetilde{x}_{l,m} = \frac{1}{N^2}\sum_{j_1,j_2} e^{-i\frac{2\pi (j_1l +j_2m)}{N}} \Ad (\Vc^{j_1}\otimes \Vc^{j_2})(\widetilde{x}).
\]
Correspondingly for the decomposition of $\t_k(\widetilde x)$, we have
\[
\widetilde{x}_{l,m}(\Vc^{km}\otimes\1) = \frac{1}{N^2}\sum_{j_1,j_2} e^{-i\frac{2\pi (j_1l +j_2m)}{N}}
 \Ad (\Vc^{j_1}\otimes \Vc^{j_2})(\t_k(\widetilde{x})).
\]
In particular, we see $\|\widetilde{x}_{l,m}\| \le \|\t_k(\widetilde{x})\|$
and hence $\|\widetilde{x} \| = \|\sum_{l,m} \widetilde{x}_{l,m}\| \le N^2 \|\t_k(\widetilde{x})\|$.
In other words, $\t_k^{-1}$ is bounded by $N^2$.
\end{proof}

Note that here the finiteness of $N$ is crucial. The author does not know if
the same holds for the action of e.g. $S^1$, although one can always take an arbitrary finite subgroup
$\ZZ_N$ of $S^1$.

\subsection{Counterexamples from massless case}
In previous Sections we started with a strictly local Borchers triple, constructed new triples and
proved strict locality.
Although our method may seem quite simple, it is neither trivial nor
purely group-theoretic.
In order to see this, we take the massless models, for which one can construct
Borchers triples but strict locality fails.
See also \cite[Section 4.4]{Tanimoto12-2}.

Let $(\A_0, T_0, \Omega_0)$ be any strongly additive conformal (diffeomorphism covariant)
net with an action of $\ZZ_N$ by inner symmetry (there are many such nets, e.g.\! the tensor product of
the $\u1$-current or the loop group nets). A two-dimensional massless net can be constructed by tensor product:
$\A(I_+\times I_-) := \A_0(I_+)\otimes \A_0(I_-), T(t_+, t_-) := T_0(t_+)\otimes T_0(t_-),
\Omega := \Omega_0\otimes\Omega_0$. On the net $\A$ there is an action of $\ZZ_N$
by inner symmetry which acts only on the left component of $\A_0(I_+)\otimes\A_0(I_-)$. Let $Q$ be the
generator with $\sp Q \subset \ZZ$, which is of the form $Q_0\otimes\1$.
With the wedge algebra $\M := \A_0(\RR_-)\otimes \A_0(\RR_+)$,
the twisted Borchers triple is given by
\begin{eqnarray*}
\Mtc &:=& \M\otimes\1\vee \Ad e^{i\frac{2\pi k}{N}Q\otimes Q}(\1\otimes \M)\\
&=& \A_0(\RR_-)\otimes\A_0(\RR_+)\otimes\1\vee \Ad e^{i\frac{2\pi k}{N}Q\otimes Q}(\1\otimes \M) \\
&\cong& \left(\A_0(\RR_-)\otimes \1 \vee \Ad e^{i\frac{2\pi k}{N}Q_0\otimes Q_0}(\1\otimes \A_0(\RR_-))\right)\otimes 
\left(\A_0(\RR_+)\otimes \A_0(\RR_+)\right),
\end{eqnarray*}
where in the last line we interchanged the second and the third components for brevity.
We consider the relative commutant of the wedge shifted by $a = (a_+, a_-), a_+ < 0, a_- > 0$.
With the above interchanged notation, it is clear that the $-$-component (the 3rd and 4th tensor components)
of the intersection is simply $\A_0((0,a_-))\otimes\A_0((0,a_-))$.
As for the $+$-component (the 1st and 2nd tensor components), one observes that
it is almost same as the intersection calculated in \cite[Theorem 4.16]{Tanimoto12-2}.
The only change is that the direction of the 2nd component is reversed. This does not affect
the proof, indeed, the inner symmetry commutes with translation and the positivity of energy
is used only through Reeh-Schlieder property hence is not essential.
Thus we have
\[
\Mtc \cap \Ad \Ttc(a)(\Mtc) = \left(\A_0^{\a_0}((a_+,0))\otimes\A_0^{\a_0}((a_+,0))\right)\otimes \left(\A_0((0,a_-))\otimes\A_0((0,a_-))\right),
\]
where $\A_0^{\a_0}$ denotes the fixed point with respect to $\Ad e^{i\frac{2\pi k}{N}Q_0}$.
In particular with $k=1$, this does not satisfies the Reeh-Schlieder property and strict locality fails if
$\ZZ_N$ acts nontrivially.

This counterexample shows that our proof of locality is by no means purely group-theoretic.
Namely, in order to obtain strict locality, it is necessary to assume stronger property than strict locality
itself of the original net (wedge-split property or modular nuclearity as above).
The above massless counterexample appears to be related to the subtlety in massless bootstrap program:
the convergence of form factors is typically worse in massless models and even the fundamental
commutativity theorem relies on the behavior of form factors, which is also worse (c.f. \cite{Smirnov92}).
Hence if one aims at constructing Wightman fields or Haag-Kastler net out of form factors,
the problem of convergence is inevitable.

\section{Realization as deformed fields and scattering theory}\label{scattering}
In Section \ref{inner} we constructed families of Borchers triples
operator-algebraically. Such a construction was also useful for the argument of strict locality
for one case (Section \ref{general}). However, we still have to show that the resulting
nets of von Neumann algebras are really new, or more desirably have nontrivial interaction
and for this purpose the previous presentation is not very convenient. Fortunately, it turns out
that the nets are accompanied by (wedge-local) quantum fields which create one-particle states
and the scattering process can be calculated \cite{BBS01}. We follow the notations of \cite{LS12}.

\subsubsection*{The Zamolodchikov-Fadeev algebra}
Here we consider the construction in Section \ref{inner} applied to the complex free
massive field net. The complex massive free net is given simply by the tensor product
$\Mc := \Mr\otimes\Mr, \Tc := \Trr\otimes \Trr, \Omc := \Omr\otimes\Omr$.
The Hilbert space $\Hc := \Hr\otimes\Hr$ is canonically isomorphic to
the Fock space $\F(\H_1\oplus\H_1)$.
The U(1) symmetry transformation of the complex field is
constructed as follows:
On the ``one-particle space'' $\H_1\oplus\H_1$ we consider the following operator
\[
V_1(\k) \left(\begin{array}{c}\xi\\ \eta\end{array}\right) =
\left(\begin{array}{cc} \cos 2\pi\k & -\sin 2\pi\k \\ \sin 2\pi\k & \cos 2\pi\k \end{array}\right)\left(\begin{array}{c}\xi\\ \eta\end{array}\right),\,\,\, \k \in \RR
\]
The second quantized promotion to $\Hc$ is denoted by $\Vc(\k) := \Gamma(V_1(\k))$.
The operator $\Vc(\k)$ obviously commutes with $\Tc$ and preserves $\Omc$.
Moreover, for field operators one has
\begin{eqnarray*}
\Ad \Vc(\k)(\phi(f)\otimes \1) &=& \cos 2\pi\k(\phi(f)\otimes \1) + \sin 2\pi\k(\1\otimes \phi(f)), \\
\Ad \Vc(\k)(\1\otimes \phi(g)) &=& -\sin 2\pi\k(\phi(g)\otimes \1) + \cos 2\pi\k(\1\otimes \phi(g)),
\end{eqnarray*}
and it holds that
$\Ad \Vc(\k)(e^{i(\phi(f)\otimes\1)}) = e^{i \cos 2\pi\k \phi(f)}\otimes e^{i\sin 2\pi\k \phi(f)}$
and $\Ad \Vc(\k)(e^{i(\1\otimes\phi(g)}) = e^{-i \sin 2\pi\k \phi(g)}\otimes e^{i\cos 2\pi\k \phi(g)}$.
By considering $f$ and $g$ supported in $W_\R$, we conclude that
$\Ad \Vc(\k)(\Mc) = \Mc$. In other words, $\Vc(\k)$ implements an inner symmetry of
the group $\RR/ \ZZ \cong S^1$.

As we have seen in Section \ref{general}, our proof of strict locality
works for an action of $S^1$. Hence in the following we consider only that case.

Since $(\Mr, \Trr, \Omr)$ is wedge-split, so is the tensor product $(\Mc, \Tc, \Omc)$.
By considering the above action of $S^1$ by inner symmetry,
one can construct Borchers triples $(\Mtcs, \Ttc, \Omtc)$ as in Section \ref{inner},
which we know to be strictly local by Section \ref{general}.

We first take a closer look at the action of $\Vc(\k)$. 
The matrix $V_1(\k)$ expressed above can be diagonalized by
$\frac{1}{\sqrt 2}\left(\begin{array}{cc} \1 & i\1 \\ \1 & -i\1\end{array}\right)$ into
$\left(\begin{array}{cc} e^{i2\pi\k} & 0 \\ 0  & e^{-i2\pi\k}\end{array}\right)$.
Correspondingly we define $\H_{1,\pm} := \{\psi\oplus \pm i\psi: \psi \in \H_1\}$.
Then the full Fock space $\Hc$ can be decomposed into $\ZZ$-graded subspaces
$\Hc = \bigoplus_{l\in \ZZ} \Hc^l$ and we may assume that
the generator $\Qc$ of $\Vc(\k)$ acts by $l\1$ on $\Hc^l$
(by definition of the grading), and $\Vc(\k) = e^{i2\pi\k\Qc}$.
Hence on the Hilbert space of our interest $\Htc = \Hc\otimes\Hc =
\bigoplus \Hc^l\otimes\Hc^m$, it is clear that $\Qc\otimes \Qc$ acts by $lm\1$ on $\Hc^l\otimes\Hc^m$.

Now the operator $b^\dagger_+(\psi) := b^\dagger(\psi\oplus(i\psi))$ on $\Hc$
increments the grading and so does $b_-(\psi) := b(\psi\oplus(-i\psi))$.
On the other hand, $b^\dagger_-(\psi) := b^\dagger(\psi\oplus(-i\psi))$ and
$b_+(\psi) := b(\psi\oplus(i\psi))$ decrement the grading.

Now it is easy to see the following twisted commutation relations:
\begin{eqnarray*}
b_\pm^\dagger(\psi_1)\otimes\1\cdot \Ad \Vtcs(\1\otimes b_\pm^\dagger(\psi_2)) -
e^{\pm(\mp i2\pi \k)}\cdot\Ad \Vtcs(\1\otimes b_\pm^\dagger(\psi_2))\cdot b_\pm^\dagger(\psi_1)\otimes\1 &=& 0, \\
b_\pm(\psi_1)\otimes\1\cdot \Ad \Vtcs(\1\otimes b_\pm^\dagger(\psi_2)) -
e^{\mp(\mp i2\pi \k)}\cdot\Ad \Vtcs(\1\otimes b_\pm^\dagger(\psi_2))\cdot b_\pm(\psi_1)\otimes\1 &=& 0, \\
b_\pm^\dagger(\psi_1)\otimes\1\cdot \Ad \Vtcs(\1\otimes b_\pm(\psi_2)) -
e^{\pm(\pm i2\pi \k)}\cdot\Ad \Vtcs(\1\otimes b_\pm(\psi_2))\cdot b_\pm^\dagger(\psi_1)\otimes\1 &=& 0, \\
b_\pm(\psi_1)\otimes\1\cdot \Ad \Vtcs(\1\otimes b_\pm(\psi_2)) -
e^{\mp(\pm i2\pi \k)}\cdot\Ad \Vtcs(\1\otimes b_\pm(\psi_2))\cdot b_\pm(\psi_1)\otimes\1 &=& 0,
\end{eqnarray*}
where the signs $\pm$ etc.\! in the first term correspond to respectively
to $\pm$ etc. in the constant factor in the second term.
The commutation relation between objects with or without $\Ad \Vtcs$ follows trivially from
the usual ones. Namely, we have
\[
[b_\pm^\dagger(\psi_1), b_\pm^\dagger(\psi_2)]  = 0, \;\;\;
[b_\pm^\dagger(\psi_1), b_\pm(\psi_2)]  = \<\psi_2, \psi_1\>\1,\;\;\;
\]
and all other combinations commute
(note that $\<\,\cdot\,,\cdot\,\>$ is linear in the second argument and $b^\dagger(\cdot)$
is linear and $b(\cdot)$ is antilinear).
In other words, these operator-valued distributions satisfy the Zamolodchikov-Fadeev algebra,
with the S-matrix given by the phase factors.
This two-particle scattering matrix (see below) does not depend on the rapidity $\theta = \log p$
(note that $p$ is associated to the lightlike translation, not the spacelike translation as usual.
We will assume that the mass is $1$ for simplicity).

Note that this set of commutation relations can be summarized in the form of matrix.
We take a basis $\{e_{1,+}, e_{1,-}, e_{2,+}, e_{2,-}\}$ on $\CC^2\otimes\CC^2$ and accordingly
$\{e_{1,+}\otimes e_{1,+}, e_{1,+}\otimes e_{1,-}, e_{1,+}\otimes e_{2,+},
e_{1,+}\otimes e_{2,-},\cdots \}$ on $(\CC^2\otimes\CC^2)\otimes (\CC^2\otimes\CC^2)$, where
\[
 (\Hr\oplus\Hr)\oplus(\Hr\oplus\Hr) = (\Hr \otimes \CC^2)\otimes \CC^2
\]
is understood. The signs $\pm$ refer to the structure of the complex free field,
while indices $1,2$ are the first and the second copies of the field.
The two-particle S-matrix $\Stcs(\theta)$ is given
on this basis by (note that this is constant with respect to $\theta$):
\[
\left(\begin{array}{cccc|cccc|cccc|cccc}
1 & & & &              & & & &                                & & & &                           & & & \\
  & & & &            1 & & & &                                & & & &                           & & & \\
  & & & &              & & & &          e^{i2\pi \k}          & & & &                           & & & \\
  & & & &              & & & &                                & & & &    e^{-i2\pi \k}          & & & \\
\hline
 & 1 & & &           &   & & &          &                        & & &   &                       & & \\
 &   & & &           & 1 & & &          &                        & & &   &                       & & \\
 &   & & &           &   & & &          & e^{-i2\pi \k}          & & &   &                       & & \\
 &   & & &           &   & & &          &                        & & &   & e^{i2\pi \k}          & & \\
\hline
 & & e^{-i2\pi \k}          & &    & &                       & &    & &   & &          & &   & \\
 & &                        & &    & & e^{i2\pi \k}          & &    & &   & &          & &   & \\
 & &                        & &    & &                       & &    & & 1 & &          & &   & \\
 & &                        & &    & &                       & &    & &   & &          & & 1 & \\
\hline
 & & & e^{i2\pi \k}          &     & & &                        &   & & &   &          & & & \\
 & & &                       &     & & & e^{-i2\pi \k}          &   & & &   &          & & & \\
 & & &                       &     & & &                        &   & & & 1 &          & & & \\
 & & &                       &     & & &                        &   & & &   &          & & & 1
\end{array}\right)
\]
And it is straightforward that this complies the conditions of \cite[Definition 2.1]{LS12}
if the charge conjugation $\Jc$ is introduced which exchanges $+$ and $-$ fields in each component.
This is said to be diagonal in the sense of \cite[Section 6]{LS12}. The advantage of our methods is
that the strict locality can be seen as an immediate consequence of Section \ref{general}.

This two-particle S-matrix is nontrivial only between different components. One could say
that the interaction occurs only between particles of different species but there is no
self-interaction. This is clear also from the construction: one component remains
unchanged and the other component is just shifted by a unitary equivalence, thus
the twisting exists only between different components.

With the help of the analysis \cite{LS12}, we have the following
since we see below that our von Neumann algebra is generated by those wedge-local fields.
\begin{theorem}
 The triple $(\Mtcs,\Ttc,\Omtc)$ is strictly local and the corresponding Haag-Kastler
net is asymptotically complete and interacting and the S-matrix is factorizing
and its two-particle S-matrix is given as above.
\end{theorem}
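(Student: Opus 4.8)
The plan is to deduce the statement by combining the strict locality established in Section \ref{general} with the scattering analysis of \cite{LS12}, applied to the Zamolodchikov-Fadeev realization of $(\Mtcs,\Ttc,\Omtc)$ obtained above.

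First I would record strict locality. The complex free Borchers triple $(\Mc,\Tc,\Omc)=(\Mr\otimes\Mr,\Trr\otimes\Trr,\Omr\otimes\Omr)$ is wedge-split: the triple $(\Mr,\Trr,\Omr)$ satisfies modular nuclearity by Section \ref{free}, and modular nuclearity---hence the wedge-split property---passes to tensor products by the same estimate as in Lemma \ref{lm:nuclearity-product}. Since, as shown above, $\Vc(\k)$ implements an action of $S^1$ by inner symmetry on this triple, Theorem \ref{th:strict-split} gives that $(\Mtcs,\Ttc,\Omtc)$ is wedge-split, and Theorem \ref{th:lechner} then yields strict locality; consequently the net built from this Borchers triple as in Section \ref{preliminaries} has cyclic vacuum and is a Haag-Kastler net.

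Next I would identify the model with a ZF-model in the sense of \cite{LS12}. The computation above exhibits families $a^\dagger_\pm(\psi)\otimes\1$, $\Ad\Vtcs(\1\otimes a^\dagger_\pm(\psi))$ and their adjoints satisfying the Zamolodchikov-Fadeev relations with the constant two-particle S-matrix $\Stcs(\theta)$ displayed above. I would check that $\Stcs$ meets the axioms of \cite[Definition 2.1]{LS12}: unitarity and the Yang-Baxter equation follow from its block/permutation structure, hermitian analyticity and crossing symmetry are immediate since $\Stcs$ is constant with entries among $1$ and $e^{\pm i2\pi\k}$, and compatibility with the charge conjugation $\Jc$ exchanging the $+$ and $-$ fields in each component is a direct verification; moreover $\Stcs$ is diagonal in the sense of \cite[Section 6]{LS12}. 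One then realizes the wedge-local field of this ZF-model---built as in Section \ref{lechner}, now with internal space $\CC^2\otimes\CC^2$, scattering function $\Stcs$ and charge conjugation $\Jc$---in terms of the above operators, and verifies that $\Mtcs$ coincides with the corresponding wedge algebra; equivalently, that under the natural identification $\Htc=\Hc\otimes\Hc$ is the $\Stcs$-symmetrized Fock space over $\H_1\otimes\CC^2\otimes\CC^2$. This is the realization as deformed fields.

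With strict locality and this realization in hand, the Haag-Ruelle-type scattering theory of \cite{BBS01} as carried out in \cite{LS12} applies directly: the wedge-local field creates single-particle states from $\Omtc$, strict locality supplies the local observables needed to build in- and out-states, the identification with the $\Stcs$-symmetrized Fock space gives asymptotic completeness, and the scattering operator factorizes into two-particle S-matrices whose value is $\Stcs(\theta)$ as above. Finally, whenever $\k\notin\tfrac12\ZZ$ the matrix $\Stcs$ is not of trivial permutation-type form, so the S-matrix is nontrivial and the net is interacting. I expect the main obstacle to be the identification in the third paragraph: proving that the operator-algebraically defined $\Mtcs$ genuinely equals the wedge algebra of the ZF-model with scattering function $\Stcs$, since only after this identification can the scattering results of \cite{LS12} be invoked directly---the strict locality then costs nothing extra, being already provided by Section \ref{general}.
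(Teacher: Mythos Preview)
Your proposal is correct and follows essentially the same route as the paper: strict locality via Theorem \ref{th:strict-split} applied to the wedge-split complex free triple with its $S^1$ inner symmetry, then identification of $\Mtcs$ with the wedge algebra of the ZF-model with two-particle S-matrix $\Stcs$, and finally invocation of the scattering analysis of \cite{LS12}. The identification you flag as the potential obstacle is precisely what the paper carries out in the ``Comparison of von Neumann algebras'' paragraph immediately following the theorem, so your anticipation is on target; one minor point is that the S-matrix is already nontrivial for $\k\notin\ZZ$ (at $\k\in\tfrac12\ZZ\setminus\ZZ$ the phases are $-1$, which is still interacting), so your sufficient condition for interaction is slightly more restrictive than necessary.
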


\subsubsection*{Comparison of von Neumann algebras}
By definition we have $\Mtcs = \Mc\otimes \1 \vee \Ad \Vtcs(\1\otimes\Mc)$
and $\Mc$ is generated by the exponential of fields $\phi(f)\otimes \1, \1\otimes\phi(g)$, where
$f,g$ are real test functions with $\supp f, \supp g \subset W_\R$ and $\Hc = \Hr\otimes\Hr$ is understood.
First let us consider the $\Mc\otimes\1$ component.
Following \cite{LS12}, we consider pairs of complex valued test functions $f,g$ such that $\overline f = g$.
The complex field which generates the wedge algebra in \cite{LS12} is given in our notation by
\begin{eqnarray*}
\phi_{\mathrm c}(f\oplus g) &=& b_{\mathrm c}^\dagger(f^+\oplus g^+) + b_{\mathrm c}(J_1(f^-\oplus g^-)) \\
&:=& b_+^\dagger(f^+) + b_-^\dagger(g^+) + b_+(\overline{g^-}) + b_-(\overline{f^-}) \\
&=& \phi(f)\otimes\1 + \1\otimes i\phi(f) + \phi(g)\otimes \1 - \1\otimes i\phi(g) \\
&=& \phi(f + \overline f)\otimes \1 + \1\otimes i \phi(f - \overline f) \\
&=& \phi(2\Re f)\otimes \1 - \1\otimes \phi(2\Im f),
\end{eqnarray*}
where $J_1(\xi\oplus \eta) = \overline \eta \oplus \overline \xi$ and $\xi, \eta \in L^2(\RR,d\theta)$
and $f^\pm(\theta), g^\pm(\theta)$ are defined as before. Namely,
the fields in complex and real basis are just the linear combination of each other.
Note that $\phi$ and $b^\dagger$ are linear but $b$ is antilinear. 

It follows also that
$\Ad \Vtcs(\1\otimes\phi_{\mathrm c}(f\oplus g)) = \Ad\Vtcs(\1\otimes(\phi_{\mathrm c}(2\Re f)\otimes\1 - \1\otimes\phi_{\mathrm c}(2\Im f)))$.
From this one easily shows that the wedge algebra generated by the Zamolodchikov-Fadeev
fields in the sense of \cite{LS12} is equal to $\Mtcs = \Mc\otimes\1 \vee \Ad \Vtcs(\1\otimes\Mc)$.

\subsubsection*{Relation to the Federbush model}
One notices that this S-matrix is very similar to the one of the Federbush model
\cite{Ruijsenaars82, CAF01, Schroer98},
although here the fields are bosonic. However, our procedure can be easily adapted to
fermionic nets. Moreover, in the traditional approach there were technical problems:
One can construct local fields only for small coupling constant \cite{Ruijsenaars83}, or
if one takes the bootstrap approach, the convergence of form factors is not clear \cite{BK04}.
Here this problem is completely solved. We can prove the existence of local operators
for any value of $\k$ if we consider the action of $S^1$.
By comparing the S-matrix, this corresponds to an arbitrary value of the coupling constant.

More importantly, our construction is not restricted to the Federbush models.
One can take any wedge-split net with inner symmetry. This contains, for example,
the tensor product of one of Lechner's models \cite{Lechner08}, instead of the real free field.
One can consider $n$ copies of the real free field, which have $\mathrm{O}(n)$ symmetry, then
take any subgroup of $\mathrm{O}(n)$ isomorphic to $S^1$ or $\ZZ_N$. This should correspond to the
Lie-algebraic generalization of the Federbush models, whose form factors were proposed in \cite{CAF01}.
It works also with $n$ copies of one of Lechner's models. 
Furthermore, the constructed net admits again inner symmetry
and is wedge-split, hence one can repeat the construction to obtain further new models
(on a bigger Hilbert space).

\subsubsection*{Some technical remarks on inner symmetry}
The wedge-algebra of our two-dimensional
nets is given by the tensor product twisted by the inner symmetry.
However, this does not mean that there is a subnet which is a copy of a tensor component.
This is clear because any tensor component in the right wedge does not commute with
the left wedge unless it is in the fixed point with respect to the action of $S^1$.
It is also noted that a wedge-split net has no nontrivial DHR sector \cite{Mueger98},
hence any extension of such a net is a tensor product.

One realizes that
the whole net still admits an action of $S^1\times S^1$ by inner symmetry.
The fixed point net may fail to have Haag duality \cite{Mueger98}
and the standard sector theory does not apply.
The whole net is an extension of this fixed point net.


Our models violate the parity symmetry, which is clear from the S-matrix. However, the extended parity which interchanges
the two components is preserved (see \cite[Section 6.3.3]{Schroer98}).
As noted in \cite[Theorem 3.3]{BGL93}, the parity symmetry is essential for the Bisognano-Wichmann
property in two dimensions. Although this is not necessarily related to our models, we present
a simple counterexample. One takes the complex free field, which admits $S^1$-inner symmetry
with Bisognano-Wichmann property. Then one can simply replace the Lorentz boosts by
the composition of Lorentz boosts and the inner symmetry. This still satisfies all the axioms of
net but violates the parity symmetry which must have the appropriate commutation relation with
the boosts.
Accordingly, the inner symmetry and Poincar\'e symmetry do not necessarily commute.
For example, we can take a net with a noncommutative Lie group symmetry and replace the
boosts as above.
The proof in four dimensions \cite[Theorem 10.4]{DL84} does not work in two dimensions
since the Lorentz group is abelian, and hence the Poincar\'e group has finite dimensional
unitary representations.

\section{Borchers triples through Longo-Witten endomorphisms on the $U(1)$-current net}\label{u1}
Here we exhibit another procedure to produce Borchers triples in a more concrete way. We
take the free massive net as the starting point (Section \ref{free}).
The formulae are quite similar to those in \cite{Tanimoto12-2}, but should not be confused.
Strict locality of the models constructed here is not investigated in the present paper.
The author expects that the wedge-local field presentation in Section \ref{real}
would help in order to prove strict locality for the construction here.

\subsection{Reduction to lightray}
Let $(\Ar, \Ur, \Omr)$ be the free massive net. The representation $\Ur$ can be
restricted to the positive lightray, which we denote by $\Ur^+$
and we obtain a one-dimensional Borchers triple $(\Mr, \Ur^+, \Omr)$, namely
a von Neumann algebra $\Mr := \Ar(W_\R)$, a positive energy representation $\Ur^+$ of $\RR$ and
a cyclic separating vector $\Omr$ for $\Ar$ invariant under $\Ur^+$ such that
$\Ad \Ur^+(t)(\Mr) \subset \Mr$ for $t \in \RR_+$.
We denote the translation along the negative lightray by $\Ur^-$.

Let us recall the $\u1$-current net $(\netu1, U_0, \Omega_0)$, which is a conformal net. For its definition,
see our previous discussion \cite[Section 5]{Tanimoto12-2}.
The point is that the Hilbert space is naturally isomorphic to the Fock space $\Hr$
of the massive free net, whose one-particle space is $L^2(\RR,d\theta)$ and
one considers the second quantization operators.
The above one-dimensional triple $(\Mr, \Ur^+, \Omr)$ is actually unitarily equivalent
to the triple $(\netu1(\RR_+), T_0, \Omega_0)$,
where $(\netu1, U_0, \Omega_0)$ is the $\u1$-current net and $T_0$ is the restriction
of $U_0$ to the translation subgroup. This will be explained
in more detail in \cite{BT13}.

In particular, we can exploit the Longo-Witten endomorphisms found in \cite{LW11}.
Recall that, for an inner symmetric function $\f(z)$, namely the boundary value on $\RR$
of a bounded analytic function on $0 < \Im z < \pi$, one considers the operator
$V_\f := \Gamma(\f(P_1))$, where $P_1$ is the generator of the restriction of $\Ur^+$
on the one-particle space $\H_1$, $\f(P_1)$
denotes the operator defined by functional calculus and $\Gamma$ is the second quantization
(note that in general the order of second quantization and functional calculus cannot be
exchanged: $\Gamma(\f(P_1)) \neq \f(\Gamma(P_1))$.
Then $\Ad V_\f$ preserves $\Mr$ and $V_\f$ commutes with $\Ur^+$.
Furthermore, $V_\f$ commutes with $\Ur^-$ since $\Ur^-(a) = \Gamma\left(\exp\left(\frac{it}{P_1}\right)\right)$,
as we see in \cite{BT13}.

\subsection{Construction of Borchers triples}
We work on the tensor product Hilbert space $\Htr := \Hr\otimes\Hr$.
We fix an inner symmetric function $\f$. As above, $P_1$ is the one-particle
lightlike translation. Let us recall our argument \cite[Section 5]{Tanimoto12-2}.

The physical Hilbert space $\Hr$ is included in the unsymmetrized Fock space $\H^\Sigma$.
We consider $m$ commuting operators on $\H_1^{\otimes m}$:
\[
\{\1\otimes\cdots\otimes \underset{j\mbox{-th}}{P_1}\otimes\cdots \otimes \1: 1\le j \le m\}.
\]
For $1 \le j \le m$ and $1 \le k \le n$, let us define operators on $\H^m\otimes\H^n$:
\begin{align*}
 P_{j,k}^{m,n} &:= \underset{j\mbox{-th}}{\left(\1\otimes\cdots\otimes \frac{1}{P_1}\otimes\cdots \otimes \1\right)}
\otimes {\left(\1\otimes\cdots\otimes \underset{k\mbox{-th}}{P_1}\otimes\cdots \otimes \1\right)} \\
R^{m,n}_\f &:= \prod_{j,k}\f(P_{j,k}^{m,n}),
\end{align*}
where $\f(P_{j,k}^{m,n})$ is defined by functional calculus on $\H_1^{\otimes m}\otimes\H_1^{\otimes n}$.
Now, our key operator on the unsymmetrized space $\H^\Sigma \otimes \H^\Sigma$ is
\[
\Rt := \bigoplus_{m,n} R_\f^{m,n} = \bigoplus_{m,n} \prod_{j,k} \f(P^{m,n}_{j,k}),
\]
where for $m=0$ or $n=0$ we set $R_{j,k}^{m,n} = \1$ as a convention.
It is easy to see that $\Rt$ naturally restricts to partially symmetrized subspaces $\Hr\otimes\H^\Sigma$
and $\H^\Sigma\otimes\Hr$ and to the totally symmetrized space $\Hr\otimes\Hr$.

Let $E_1\otimes E_1\otimes\cdots\otimes E_1$ be the joint spectral measure
of $\{\1\otimes\cdots\otimes \underset{k\mbox{-th}}{P_1}\otimes\cdots \otimes \1: 1\le k \le n\}$.
One obtains the following expression:
\begin{eqnarray*}
\f(P_{j,k}^{m,n}) &=& \int \left(\1\otimes\cdots \otimes\underset{j\mbox{-th}}{\f\left(\frac{p_k}{P_1}\right)}\otimes \cdots \1\right)
\otimes \left(\1\otimes\cdots \underset{k\mbox{-th}}{dE_1(p_k)}\otimes\cdots\1\right) \\
&=& \int \left(\1\otimes\cdots \underset{j\mbox{-th}}{dE_1(p_j)}\otimes\cdots\1\right) \otimes
\left(\1\otimes\cdots \otimes\underset{k\mbox{-th}}{\f\left(\frac{P_1}{p_j}\right)}\otimes \cdots \1\right)
\end{eqnarray*}

Similarly to our previous case \cite[Section 5.2]{Tanimoto12-2},
we decompose $\Rt$ with respect only to
the right or left component:
\begin{eqnarray*}
\Rt &=& \bigoplus_n\int \prod_k \G\left(\f\left(\frac{p_k}{P_1}\right)\right) \otimes dE_1(p_1)\otimes\cdots\otimes dE_1(p_n) \\
&=& \bigoplus_m\int dE_1(p_1)\otimes\cdots\otimes dE_1(p_m )\otimes \prod_j \G\left(\f\left(\frac{P_1}{p_j}\right)\right)
\end{eqnarray*}
For the proof, we refer to \cite[Section 5.2]{Tanimoto12-2}.
The expression in the first line naturally restricts to the partially symmetrized space $\Hr\otimes\H^\Sigma$
and the second expression to $\H^\Sigma\otimes \Hr$.

We have a variant of \cite[Lemma 5.2]{Tanimoto12-2}.
\begin{lemma}\label{lm:u1-commutativity}
It holds for $x \in \Ar(W_\R)$ and $x' \in \Ar(W_\R)'$ that
\[
[x\otimes \1, \Ad \Rt (x'\otimes\1)] = 0
\]
on the Hilbert space $\Hr\otimes\Hr$.
Similarly, for $y \in \Ar(W_\R)$ and $y' \in \Ar(W_\R)'$ one has
\[
[\Ad \Rt(\1\otimes y), \1\otimes y'] = 0. 
\]
\end{lemma}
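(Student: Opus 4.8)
The plan is to imitate the proof of \cite[Lemma 5.2]{Tanimoto12-2}, exploiting the two one-sided decompositions of $\Rt$ displayed above together with the Longo--Witten property of the second quantizations $\G(\f(P_1))$ recalled earlier in this section. The idea is that, once one of the two tensor factors is diagonalised with respect to its one-particle momenta, conjugation by $\Rt$ acts on the \emph{other} factor as conjugation by a product of operators of Longo--Witten type, and such conjugations preserve $\Ar(W_\R)$ or its commutant $\Ar(W_\R)'$ according to the analyticity strip of the symbol.

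For the second identity I would use the left-component decomposition
\[
\Rt = \bigoplus_m\int dE_1(p_1)\otimes\cdots\otimes dE_1(p_m)\otimes \prod_j \G\!\left(\f\!\left(\tfrac{P_1}{p_j}\right)\right),
\]
which restricts to $\Hr\otimes\Hr$. Since $\1\otimes y$ is constant along the spectral variables $p_1,\dots,p_m$ of the first factor, $\Ad\Rt(\1\otimes y)$ is the direct integral over those variables of $\1\otimes\Ad\G\!\big(\prod_j\f(P_1/p_j)\big)(y)$, where I used that $\G$ is multiplicative so that $\prod_j\G(\f(P_1/p_j))=\G\!\big(\prod_j\f(P_1/p_j)\big)$. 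For each $p_j>0$ the rescaling $P_1\mapsto P_1/p_j$ is a real translation of the strip variable, so $\f(\,\cdot/p_j)$ is again a bounded analytic function on $0<\Im z<\pi$; hence $\G\!\big(\prod_j\f(P_1/p_j)\big)$ is a Longo--Witten operator and conjugation by it preserves $\Ar(W_\R)$. Thus $\Ad\Rt(\1\otimes y)$ is a decomposable operator taking values in $\Ar(W_\R)$ on the second tensor factor, and it therefore commutes with $\1\otimes y'$ for every $y'\in\Ar(W_\R)'$.

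For the first identity I would argue symmetrically with the right-component decomposition
\[
\Rt = \bigoplus_n\int \prod_k \G\!\left(\f\!\left(\tfrac{p_k}{P_1}\right)\right)\otimes dE_1(p_1)\otimes\cdots\otimes dE_1(p_n),
\]
so that $\Ad\Rt(x'\otimes\1)$ becomes the direct integral over the momenta of the second factor of $\Ad\G\!\big(\prod_k\f(p_k/P_1)\big)(x')\otimes\1$. Here the crucial and, I expect, most delicate point is that $P_1\mapsto p_k/P_1$ is \emph{orientation-reversing} on the strip variable, so that $\f(p_k/\,\cdot\,)$ is bounded analytic on the opposite strip $-\pi<\Im z<0$; the corresponding reflected form of the Longo--Witten property --- namely that conjugation by $\G(\f(p_k/P_1))$ then preserves the commutant $\Ar(W_\R)'$ rather than $\Ar(W_\R)$, which one sees through the ray inversion exchanging $\RR_+$ with $\RR_-$ and $P_1$ with $P_1^{-1}$ --- is precisely the content of \cite[Lemma 5.2]{Tanimoto12-2}, and is where one has to check that the reflected symbol is again an admissible (symmetric inner) one. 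Granting it, $\Ad\Rt(x'\otimes\1)$ is decomposable with values in $\Ar(W_\R)'$ on the first tensor factor and hence commutes with $x\otimes\1$ for $x\in\Ar(W_\R)$. The remaining ingredients --- that the displayed direct integrals genuinely restrict to $\Hr\otimes\Hr$ (as recorded above), measurability of the integrands in the momenta, and closedness of the algebras of decomposable operators involved --- are routine, so the whole proof is a transcription of \cite[Lemma 5.2]{Tanimoto12-2} with the single conformal net there replaced by the two tensor factors of the massive free net.
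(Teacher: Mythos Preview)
Your proposal is correct and follows essentially the same route as the paper: decompose $\Rt$ with respect to one tensor factor and apply the Longo--Witten result fibrewise. The treatment of the second commutator is identical to the paper's. For the first commutator the paper takes a slightly different, dual tack: rather than leaving $\Rt$ on $x'$ and invoking a \emph{reflected} Longo--Witten property for $\Ar(W_\R)'$, it rewrites the claim as $[\Ad\Rtb(x\otimes\1),\,x'\otimes\1]=0$ (where $\Rtb=\Rt^{*}$), observes that $z\mapsto\overline{\f(p_k/\overline z)}$ is again an inner symmetric function, and then applies the standard Longo--Witten theorem to conclude $\Ad\G\!\big(\overline\f(p_k/P_1)\big)(x)\in\A_0(\RR_+)=\Ar(W_\R)$. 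The two versions are equivalent --- an endomorphism of $\M$ implemented by $V^{*}$ is the same datum as an endomorphism of $\M'$ implemented by $V$ --- so the difference is largely cosmetic; the paper's formulation has the mild advantage of quoting the Longo--Witten result verbatim rather than in reflected form. One caveat on your justification: the heuristic that the M\"obius ray inversion ``exchanges $P_1$ with $P_1^{-1}$'' is not literally correct for that transformation; the clean way to obtain your reflected statement is exactly the paper's observation that $\overline{\f(1/\overline z)}$ (hence $\overline{\f(p_k/\overline z)}$) is inner symmetric, followed by passing to commutants.
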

\begin{proof}
The operator $\Rt$ is disintegrated into second quantization operators as we saw above.
First we consider the first of the commutators above.
The operator $\Rt$ restricts naturally to $\Hr\otimes\H^\Sigma$ and $x\otimes\1$
and $x'\otimes\1$ extend naturally to $\Hr\otimes\H^\Sigma$.
It is easy to see that if $\f(z)$ is an inner symmetric function, then so are $\overline{\f(1/\overline z)}$
and hence $\overline{\f(p_k/\overline z)}$ for $p_k \ge 0$.
The first commutation relation is equivalent to 
\[
[\Ad \Rtb(x\otimes \1), x'\otimes\1] = 0.
\]
Let us prove this on $\Hr\otimes\H^\Sigma$. We have
\[
\Ad \Rtb(x\otimes\1)
= \bigoplus_n\int \Ad \left(\prod_k \G\left(\overline\f\left(\frac{p_k}{P_1}\right)\right)\right)(x) \otimes dE_1(p_1)\otimes\cdots\otimes dE_1(p_n),
\]
and this commutes with $x'\otimes\1$. Indeed, since $x\in\Ar(W_\R) = \A_0(\RR_+)$ and $x'\in\Ar(W_\R)' = \A_0(\RR_-)$,
it follows that  $\Ad \G\left(\overline\f\left(\frac{p_k}{P_1}\right)\right)(x) \in \A_0(\RR_+)$
for any $p_k \ge 0$ by the result of
Longo and Witten \cite{LW11} (see also the beginning of this Section),
and by the fact that the spectral support of $E_1$ is positive.
Now the commutation relation just proved naturally restricts to $\Hr\otimes\Hr$ and we obtain
the first relation.

The proof of the second commutation relation goes more simply. We only have to consider the expression
\[
\Ad \Rt(\1\otimes y)
= \bigoplus_m\int dE_1(p_1)\otimes\cdots\otimes dE_1(p_m) \otimes \Ad \left(\prod_j \G\left(\f\left(\frac{P_1}{p_j}\right)\right)\right)(y).
\]
The rest of the argument is parallel as above.
\end{proof}

Let $\Trr$ be the restriction of $\Ur$ to translation (not to be confused with
trace).
Our Borchers triple is given as follows.
\begin{theorem}
The triple
\begin{itemize}
\item $\Mtr := \{x\otimes \1, \Ad \Rt (\1\otimes y): x, y\in\Ar(W_\R)\}''$
\item $\Ttrr = \Trr\otimes \Trr$
\item $\Omtr = \Omr\otimes \Omr$
\end{itemize}
is a Borchers triple.
\end{theorem}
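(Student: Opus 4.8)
The plan is to verify the three defining properties of a Borchers triple for $(\Mtr, \Ttrr, \Omtr)$ in turn, following closely the structure of the proof of Theorem~\ref{th:borchers-inner} but substituting the operator $\Rt$ for $\Vtcs$ and using Lemma~\ref{lm:u1-commutativity} in place of Lemma~\ref{lm:commutativity-s}. The positivity of the joint spectrum of $\Ttrr = \Trr\otimes\Trr$ is immediate since it is the tensor product of two copies of a positive-energy representation, and $\Ttrr(a)\Omtr = \Omtr$ follows from $\Trr(a)\Omr = \Omr$. The cyclicity of $\Omtr$ for $\Mtr$ follows from the cyclicity of $\Omr$ for $\Ar(W_\R)$ together with the fact that $\Rt(\1\otimes y)\Omtr = (\1\otimes y)\Omtr$; the latter holds because $\Rt$ acts as the identity on the vacuum component (each factor $\f(P_{j,k}^{m,n})$ involves at least one leg, and the operator restricted to $\Hr\otimes(\text{first quantized level})$ already shows $\Rt$ fixes $\Omr\otimes\Psi$-type vectors appropriately --- more precisely, on $\Omr\otimes\Hr$ one has $m=0$ so $\Rt$ acts as $\1$).

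For the wedge-covariance property $\Ad\Ttrr(a)(\Mtr)\subset\Mtr$ with $a\in W_\R$, I would argue exactly as in Theorem~\ref{th:borchers-inner}: first check that $\Rt$ commutes with $\Ttrr(a)$. This commutation holds because $\Rt$ is built by functional calculus from the operators $P_{j,k}^{m,n}$, which are themselves functions of the one-particle lightlike translation generators $P_1$ on each tensor leg; the generator $P_1$ commutes with the full translation $\Ur$ (the lightlike translations generate the abelian translation group), hence $\Ttrr(a) = \Trr(a)\otimes\Trr(a)$ commutes with each $\f(P_{j,k}^{m,n})$ and therefore with $\Rt$. Granting this, for $a\in W_\R$ we have $\Ad\Ttrr(a)(x\otimes\1) = (\Ad\Trr(a)x)\otimes\1 \in \Mtr$ since $\Ad\Ur(a)(\Ar(W_\R))\subset\Ar(W_\R)$, and $\Ad\Ttrr(a)(\Ad\Rt(\1\otimes y)) = \Ad\Rt\,\Ad\Ttrr(a)(\1\otimes y) = \Ad\Rt(\1\otimes \Ad\Trr(a)y)\in\Mtr$ for the same reason.

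The separating property of $\Omtr$ for $\Mtr$ is the step requiring the most care, and it is where Lemma~\ref{lm:u1-commutativity} does the real work. Following the pattern of Theorem~\ref{th:borchers-inner}, I would introduce the auxiliary algebra
\[
\Mtr^1 := \{\Ad\Rt(x'\otimes\1),\ \1\otimes y' : x', y'\in\Ar(W_\R)'\}''
\]
in the commutant. One checks that $\Omtr$ is cyclic for $\Mtr^1$ (again using that $\Rt$ fixes vectors of the form $(x'\otimes\1)\Omtr$ on the relevant partially-symmetrized subspace, combined with cyclicity of $\Omr$ for $\Ar(W_\R)'$, which holds by Haag duality / the Reeh--Schlieder property of the free net). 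It then suffices to show $\Mtr$ and $\Mtr^1$ commute. This reduces to four commutation checks on the generators: $[x\otimes\1,\ \1\otimes y'] = 0$ is trivial; $[x\otimes\1,\ \Ad\Rt(x'\otimes\1)] = 0$ is precisely the first statement of Lemma~\ref{lm:u1-commutativity}; $[\Ad\Rt(\1\otimes y),\ \1\otimes y'] = 0$ is the second statement of that lemma; and $[\Ad\Rt(\1\otimes y),\ \Ad\Rt(x'\otimes\1)] = 0$ follows by applying $\Ad\Rt^{-1}$ (note $\Rt$ is a product of functional calculi of inner symmetric functions, hence unitary) and observing that $\1\otimes y$ commutes with $x'\otimes\1$ trivially. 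Assembling these four facts gives $\Mtr\subset(\Mtr^1)'$, hence $\Omtr$ separating for $\Mtr$.

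The main obstacle I anticipate is the careful bookkeeping of how $\Rt$ restricts between the unsymmetrized, partially symmetrized, and fully symmetrized Fock spaces --- in particular making sure the commutators in Lemma~\ref{lm:u1-commutativity}, which are asserted on $\Hr\otimes\Hr$, genuinely suffice and that the auxiliary algebra $\Mtr^1$ really lies in the commutant with the correct disintegration of $\Rt$ (using the second of the two displayed decompositions of $\Rt$, which restricts to $\H^\Sigma\otimes\Hr$). Everything else is a direct transcription of the arguments already carried out for Theorem~\ref{th:borchers-inner} and in \cite[Section 5]{Tanimoto12-2}.
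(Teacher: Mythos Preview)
Your proposal is correct and follows essentially the same route as the paper's own proof: the paper verifies positivity and invariance of $\Omtr$ as tensor-product facts, obtains the wedge-covariance from the commutation of $\Rt$ with $\Ttrr$ (both being functional calculus of the same spectral measure), proves cyclicity via $\Rt(\1\otimes y)\Omtr=(\1\otimes y)\Omtr$, and establishes the separating property by introducing exactly your auxiliary algebra $\Mtr^1$ and invoking Lemma~\ref{lm:u1-commutativity}. Your explicit breakdown of the four generator-level commutators (including the observation that $[\Ad\Rt(\1\otimes y),\Ad\Rt(x'\otimes\1)]=0$ reduces, after conjugating by the unitary $\Rt^{-1}$, to the trivial $[\1\otimes y,\,x'\otimes\1]=0$) simply spells out what the paper compresses into the single sentence ``This has been done by Lemma~\ref{lm:u1-commutativity}.''
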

\begin{proof}
The conditions on $\Ttrr$ and $\Omtr$ are readily satisfied since 
they are same as the tensor product net. The operators $\Rt$ and $\Ttrr$ commute since
both are defined by functional calculus of the same spectral measure (recall that
$\Trr(t_+,t_-) = \Ur^+(t_+)\Ur^-(t_-)$), hence
$\Ttrr(t_+,t_-)$ sends $\Mtr$ into itself for $(t_+,t_-) \in W_\R$.
The vector $\Omtr$ is cyclic for $\Mtr$ since
$\Mtr\Omtr \supset \{(x\otimes \1)\cdot \Rt\cdot (\1\otimes y)\cdot\Omtr\}
= \{(x\otimes \1)\cdot (\1\otimes y)\cdot\Omtr\}$
and the latter is total by the Reeh-Schlieder property of the tensor product net.

We see the separating property of $\Omtr$ as follows. Consider a von Neumann algebra
\[
\Mtr^1 := \{\Ad \Rt(x'\otimes \1), \1\otimes y': x', y'\in\Ar(W_\R)'\}''.
\]
One verifies that $\Omtr$ is cyclic for $\Mtr^1$ as above, hence we only have
to show that $\Mtr$ and $\Mtr^1$ commute. This has been done by Lemma \ref{lm:u1-commutativity}.
\end{proof}
One can actually show that $\Mtr^1 = (\Mtr)'$, so this confusing notation is justified.
Indeed, one has only to check the modular group of $\Mtr$ with respect to $\Omtr$ is
the same for the tensor product, which follows from the field representation in the next Section
and the argument of \cite{Lechner11}.

In this presentation we took $\Ar$ as the starting point. It is also possible
to take the models in \cite{Lechner08} or more general models with spectra with more particle
\cite{LS12}. We will discuss this (slight) generalization in \cite{BT13}.
In this paper we do not consider strict locality of these Borchers triples,
although the author expects that a similar proof to \cite{Lechner08} should work.

\subsection{Realization as deformed fields}\label{real}
Let us see that the Borchers triples constructed in Section \ref{u1} admit
a wedge-local field interpretation. Since the twisting operator $\Rt$ is
given on each particle number space, the calculation is straightforward.
We only need the following commutation relations:
\[
b^\dagger(\psi_1)\otimes\1 \cdot \Ad \Rt(\1\otimes b^\dagger(\psi_2)) -
\int dp dp' \psi_1(p)\psi_2(p')\f\left(\frac{p}{p'}\right) \Ad \Rt(\1\otimes b^\dagger(p'))\cdot b^\dagger(p)\otimes \1 = 0,
\]
or with rapidity $\theta = \log p$ (here again $p$ is associated to the lightlike translation as above
and mass is $1$) and in terms of operator-valued distributions one has
\[
 b^\dagger(\theta)\otimes\1\cdot\Ad \Rt(\1\otimes b^\dagger(\theta')) -
\f\left(e^{\theta-\theta'}\right) \Ad \Rt(\1\otimes b^\dagger(\theta'))\cdot b^\dagger(\theta)\otimes \1 = 0.
\]
Note that if $\f$ is an inner symmetric function, then $\f(e^\theta)$ is
a bounded analytic function in the strip $0 < \Im \theta < \pi$ and $\f(e^{i\pi-\theta}) = \f(-e^{-\theta})$
has the same property. In the matrix form, it can be written as
\[
 \Str(\theta) = \left(\begin{array}{cccc}
            1 & & & \\
            & & \f(e^\theta) & \\
            & \f(-e^{-\theta}) & & \\
            & & & 1
           \end{array}\right)
\]
on the basis $\{e_1\otimes e_1, e_1\otimes e_2, e_2\otimes e_1, e_2\otimes e_2\}$.
This is again a diagonal S-matrix. From its simple form, it is expected that
the proof of modular nuclearity is similar to the one in \cite{Lechner08}.

It is clear that also in these models the interaction occurs only between
different components.

\section{Summary and outlook}\label{outlook}
In this paper, we presented a novel procedure to obtain interacting quantum field
models realized as nets of observables. One can view this procedure as first preparing
a pair of models then making them couple. 

The method is thoroughly operator-algebraic and in the most abstract setting of 
Section \ref{general} no field picture (wedge-local or not) is required. On the other
hand, the interaction is fairly weak. When the field description is available, a pair
of particles of different species obtains a phase during the interaction
and no momentum transfer occurs.
A connection with wedge-local field approach and our previously constructed Borchers triples
has been found in \cite{LST12}.
It is worth investigating how to obtain more general
integrable models (e.g. \cite{Lechner08, LS12}) purely operator-algebraically, where
a ``self-interaction'' occurs.

Furthermore, an interesting variant has been obtained in \cite{BT12}. Strict locality of the models
therein is not known, but S-matrix shows a phenomenon which resembles particle production. It is
desired to establish strict locality of these models or to find massive counterparts.

More ambitiously, certain relations are claimed between integrable models and higher-
dimensional gauge theories (e.g.\! \cite{BAA12}). The author wishes to study such connections with
analytical approach.

\subsubsection*{Acknowledgment.}
I would like to thank Marcel Bischoff and Karl-Henning Rehren for helpful discussions.


\begin{thebibliography}{10}

\bibitem{BK04}
H.~Babujian and M.~Karowski.
\newblock Towards the construction of {W}ightman functions of integrable
  quantum field theories.
\newblock In {\em Proceedings of 6th {I}nternational {W}orkshop on {C}onformal
  {F}ield {T}heory and {I}ntegrable {M}odels}, volume~19, pages 34--49, 2004.

\bibitem{Bargmann54}
V.~Bargmann.
\newblock On unitary ray representations of continuous groups.
\newblock {\em Ann. of Math. (2)}, 59:1--46, 1954.

\bibitem{Baumgaertel}
Hellmut Baumg{\"a}rtel.
\newblock {\em Operator algebraic methods in quantum field theory}.
\newblock Akademie Verlag, Berlin, 1995.

\bibitem{BAA12}
Niklas Beisert, Changrim Ahn, Luis~F. Alday, and et~al.
\newblock Review of {A}d{S}/{CFT} integrability: {A}n overview.
\newblock {\em Lett. Math. Phys.}, 99(1-3):3--32, 2012.

\bibitem{BT13}
Marcel Bischoff and Yoh Tanimoto.
\newblock Integrable {QFT} and {L}ongo-{W}itten endomorphisms.
\newblock {\em Ann. Henri Poincar\'e}, 16(2): 569-608, 2015.

\bibitem{BT12}
Marcel Bischoff and Yoh Tanimoto.
\newblock Construction of {W}edge-{L}ocal {N}ets of {O}bservables through
  {L}ongo-{W}itten {E}ndomorphisms. {II}.
\newblock {\em Comm. Math. Phys.}, 317(3):667--695, 2013.

\bibitem{Borchers92}
H.-J. Borchers.
\newblock The {CPT}-theorem in two-dimensional theories of local observables.
\newblock {\em Comm. Math. Phys.}, 143(2):315--332, 1992.

\bibitem{BBS01}
Hans-J{\"u}rgen Borchers, Detlev Buchholz, and Bert Schroer.
\newblock Polarization-free generators and the {$S$}-matrix.
\newblock {\em Comm. Math. Phys.}, 219(1):125--140, 2001.

\bibitem{BR2}
Ola Bratteli and Derek~W. Robinson.
\newblock {\em Operator algebras and quantum statistical mechanics. 2.
  Equilibrium states. Models in quantum statistical mechanics}.
\newblock Texts and Monographs in Physics. Springer-Verlag, Berlin, second
  edition, 1997.

\bibitem{BGL93}
R.~Brunetti, D.~Guido, and R.~Longo.
\newblock Modular structure and duality in conformal quantum field theory.
\newblock {\em Comm. Math. Phys.}, 156(1):201--219, 1993.

\bibitem{BDL90}
Detlev Buchholz, Claudio D'Antoni, and Roberto Longo.
\newblock Nuclear maps and modular structures. {I}.\ {G}eneral properties.
\newblock {\em J. Funct. Anal.}, 88(2):233--250, 1990.

\bibitem{BL04}
Detlev Buchholz and Gandalf Lechner.
\newblock Modular nuclearity and localization.
\newblock {\em Ann. Henri Poincar{\'e}}, 5(6):1065--1080, 2004.

\bibitem{BLS11}
Detlev Buchholz, Gandalf Lechner, and Stephen~J. Summers.
\newblock Warped convolutions, {R}ieffel deformations and the construction of
  quantum field theories.
\newblock {\em Comm. Math. Phys.}, 304(1):95--123, 2011.

\bibitem{CAF01}
O.~A. Castro-Alvaredo and A.~Fring.
\newblock Form factors from free fermionic {F}ock fields, the {F}ederbush
  model.
\newblock {\em Nuclear Phys. B}, 618(3):437--464, 2001.

\bibitem{DL84}
S.~Doplicher and R.~Longo.
\newblock Standard and split inclusions of von {N}eumann algebras.
\newblock {\em Invent. Math.}, 75(3):493--536, 1984.

\bibitem{DR89}
Sergio Doplicher and John~E. Roberts.
\newblock A new duality theory for compact groups.
\newblock {\em Invent. Math.}, 98(1):157--218, 1989.

\bibitem{DR90}
Sergio Doplicher and John~E. Roberts.
\newblock Why there is a field algebra with a compact gauge group describing
  the superselection structure in particle physics.
\newblock {\em Comm. Math. Phys.}, 131(1):51--107, 1990.

\bibitem{DT11}
Wojciech Dybalski and Yoh Tanimoto.
\newblock Asymptotic completeness in a class of massless relativistic quantum
  field theories.
\newblock {\em Comm. Math. Phys.}, 305(2):427--440, 2011.

\bibitem{GF93}
Fabrizio Gabbiani and J{\"u}rg Fr{\"o}hlich.
\newblock Operator algebras and conformal field theory.
\newblock {\em Comm. Math. Phys.}, 155(3):569--640, 1993.

\bibitem{Haag96}
Rudolf Haag.
\newblock {\em Local quantum physics}.
\newblock Texts and Monographs in Physics. Springer-Verlag, Berlin, second
  edition, 1996.
\newblock Fields, particles, algebras.

\bibitem{Lechner03}
Gandalf Lechner.
\newblock Polarization-free quantum fields and interaction.
\newblock {\em Lett. Math. Phys.}, 64(2):137--154, 2003.

\bibitem{Lechner08}
Gandalf Lechner.
\newblock Construction of quantum field theories with factorizing
  {$S$}-matrices.
\newblock {\em Comm. Math. Phys.}, 277(3):821--860, 2008.

\bibitem{Lechner11}
Gandalf Lechner.
\newblock Deformations of quantum field theories and integrable models.
\newblock {\em Comm. Math. Phys.}, 312(1):265--302, 2011.

\bibitem{LST12}
Gandalf Lechner, Jan Schlemmer, and Yoh Tanimoto.
\newblock On the equivalence of two deformation schemes in quantum field
  theory.
\newblock {\em Lett. Math. Phys.}, 103:421--437, 2012.

\bibitem{LS12}
Gandalf Lechner and Christian Sch\"utzenhofer.
\newblock Towards an operator-algebraic construction of integrable global gauge
  theories.
\newblock {\em Annales Henri Poincar\'e}, 15(4):645--678, 2014.

\bibitem{Longo08}
Roberto Longo.
\newblock Real {H}ilbert subspaces, modular theory, {${\rm SL}(2,{\bf R})$} and
  {CFT}.
\newblock In {\em Von {N}eumann algebas in {S}ibiu: {C}onference
  {P}roceedings}, pages 33--91. Theta, Bucharest, 2008.

\bibitem{LW11}
Roberto Longo and Edward Witten.
\newblock An algebraic construction of boundary quantum field theory.
\newblock {\em Comm. Math. Phys.}, 303(1):213--232, 2011.

\bibitem{Mueger98}
Michael M{\"u}ger.
\newblock Superselection structure of massive quantum field theories in {$1+1$}
  dimensions.
\newblock {\em Rev. Math. Phys.}, 10(8):1147--1170, 1998.

\bibitem{Ruijsenaars82}
S.~N.~M. Ruijsenaars.
\newblock Scattering theory for the {F}ederbush, massless {T}hirring and
  continuum {I}sing models.
\newblock {\em J. Funct. Anal.}, 48(2):135--171, 1982.

\bibitem{Ruijsenaars83}
S.~N.~M. Ruijsenaars.
\newblock The {W}ightman axioms for the fermionic {F}ederbush model.
\newblock {\em Comm. Math. Phys.}, 87(2):181--228, 1982/83.

\bibitem{Schroer98}
Bert Schroer.
\newblock Localization and nonperturbative local quantum physics.
\newblock {\em arXiv:hep-th/9805093}, 2012.

\bibitem{Smirnov92}
F.~A. Smirnov.
\newblock {\em Form factors in completely integrable models of quantum field
  theory}, volume~14 of {\em Advanced Series in Mathematical Physics}.
\newblock World Scientific Publishing Co. Inc., River Edge, NJ, 1992.

\bibitem{TakesakiII}
M.~Takesaki.
\newblock {\em Theory of operator algebras. {II}}, volume 125 of {\em
  Encyclopaedia of Mathematical Sciences}.
\newblock Springer-Verlag, Berlin, 2003.
\newblock Operator Algebras and Non-commutative Geometry, 6.

\bibitem{Tanimoto12-2}
Yoh Tanimoto.
\newblock Construction of {W}edge-{L}ocal {N}ets of {O}bservables {T}hrough
  {L}ongo-{W}itten {E}ndomorphisms.
\newblock {\em Comm. Math. Phys.}, 314(2):443--469, 2012.

\bibitem{Tanimoto12-1}
Yoh Tanimoto.
\newblock Noninteraction of {W}aves in {T}wo-dimensional {C}onformal {F}ield
  {T}heory.
\newblock {\em Comm. Math. Phys.}, 314(2):419--441, 2012.

\bibitem{Weiner11}
Mih{\'a}ly Weiner.
\newblock An algebraic version of {H}aag's theorem.
\newblock {\em Comm. Math. Phys.}, 305(2):469--485, 2011.

\end{thebibliography}

\def\cprime{$'$}

\end{document}